\newtheorem{lemma}{Lemma}
\newtheorem{proposition}{Proposition}
\newtheorem{theorem}{Theorem}
\theoremstyle{definition}
\newtheorem{definition}{Definition}
\title{Distributed Transactions: Dissecting the Nightmare}
\author{
Diego Didona\protect\footnote{\'Ecole Polytechnique F\'ed\'erale de Lausanne, IC, Station 14, CH-1015, Lausanne, Switzerland} $^,$\footnote{firstname.lastname@epfl.ch}
\and
Rachid Guerraoui\protect$^,$\footnotemark[1] $^,$\footnotemark[2]
\and
Jingjing Wang\protect$^,$\footnotemark[1] $^,$\footnotemark[2]
\and
Willy Zwaenepoel\protect$^,$\footnotemark[1] $^,$\footnotemark[2]
}
\date{}
\begin{document}

\maketitle

\thispagestyle{empty}

\begin{abstract}
Many distributed storage systems are transactional and a lot of work has been devoted to optimizing their performance, especially the performance of  read-only transactions that are considered the most frequent in practice. Yet, the results obtained so far are rather disappointing, and some of the design decisions seem contrived. This paper contributes to explaining this state of affairs by  proving intrinsic limitations of transactional storage systems, even those that need not ensure strong consistency but only causality.  

We first consider general storage systems where some transactions are read-only and some also involve write operations. We show that even   read-only transactions cannot be ``fast'': their operations  cannot be executed within one round-trip message exchange between a client seeking an object and the server storing it. 
We then consider systems (as sometimes implemented today) where all transactions are read-only, i.e., updates are performed as individual operations outside transactions. In this case,  read-only transactions can indeed be ``fast'', but we prove that they need to be ``visible''. They induce inherent updates on the servers, which in turn impact their overall performance.
\end{abstract}

\section{Introduction}

Transactional distributed storage systems have proliferated in the last decade: Amazon's Dynamo \cite{dynamo}, Facebook's Cassandra \cite{Lakshman_cassandra_2010}, Linkedin's Espresso \cite{qiao_brewing_2013},  Google's Megastore \cite{baker_cidr_2011}, Walter \cite{sovran_transactional_2011} and Lynx \cite{zhang_transaction_2013}
are seminal examples, to name a few. 
A lot of effort 
has been devoted to optimizing their performance for their  success heavily relies  on  their ability to execute  transactions in a fast manner  \cite{speed}. 
Given the difficulty of the task, 
two major ``strategic'' decisions have been made. The first is to prioritize  \emph{read-only transactions}, which allow clients to read multiple items at once from a consistent view of the data-store. Because many workloads are read-dominated, optimizing the performance of read-only transactions has been considered of primary importance.
The second is the departure from strong consistency models \cite{herlihy_linearizability_1990, lamport_interprocess_1986} towards weaker ones \cite{brewer_conjecture_2000, gilbert_brewer_2002, lu_snow_2016, lipton_pram_1988, attiya_sequential_1994, mavronicolas_linearizable_1999}. Among such  weaker consistency models, \emph{causal consistency} has garnered a lot of attention for it avoids heavy  synchronization inherent to strong consistency, can be implemented in an always-available fashion in geo-replicated settings (i.e., despite partitions), while providing sufficient semantics for many applications \cite{lloyd_settle_2011, lloyd_stronger_2013, bailis_bolt-on_2013, du_gentlerain_2014, zawirski_write_2015, akkoorath_cure_2016, mehdi_occult_2017}.
Yet, even the performance of highly optimized state-of-the-art causally consistent transactional storage systems has revealed disappointing. 
In fact, the benefits and implications  of many designs are unclear, and their overheads with respect to systems that provide no consistency are not well understood.

To illustrate this situation, we report here on two state-of-the-art  designs. The first   implements what we call ``fast'' read-only transactions. They  complete in one round of interaction between a client seeking to read the value of an object and the server storing it. This design is implemented by the recent COPS-SNOW \cite{lu_snow_2016}  system, which however makes the assumption that write operations are supported only outside the scope of a transaction.\footnote{Under this assumption, single-object write and a transaction that only writes to one object are equivalent.} The second design implements ``slow'' read-only transactions, that require   two communication rounds to complete. In particular, we consider the design of Cure  \cite{akkoorath_cure_2016}, which supports generic read-write transactions. 
We compare these two systems with three read-dominated workloads corresponding to 0.999, 0.99 and 0.95 read-write ratios, where clients perform read-only transactions and single-object write operations in closed loop.\footnote{We implemented these in the same C++ code-base using Google Protobuf library for communication. We run the workload on a 10Gbps Ethernet network using 64 AMD Opteron 6212 machines with 8 physical cores (16 hardware threads) and 64 GB of RAM (where 32 machines host client processes, and 32 host server processes).} 
Figure 1 reports on the average  latency of read-only transactions for the two designs (``fast'' and ``slow'') as a function of the delivered throughput, and compares them with those achieved by a system that guarantees no consistency (``no'').
The plots depict two results.
First, the slow case results in a 
higher latency with respect to a design with only one round and no consistency. 
This raises the question whether it is possible to preserve the rich semantics of generic read-write transactions and implement read-only transactions with a single communication round.
Second, the performance achieved by the fast read-only transactions are worse than the ones achieved by the slow ones, both in latency and throughput, even for read-write ratios as low as 5\%.
This is unexpected.

\begin{figure}[!h]
    \centering
  \begin{subfigure}[b]{0.32\textwidth}
    \includegraphics[width=\textwidth]{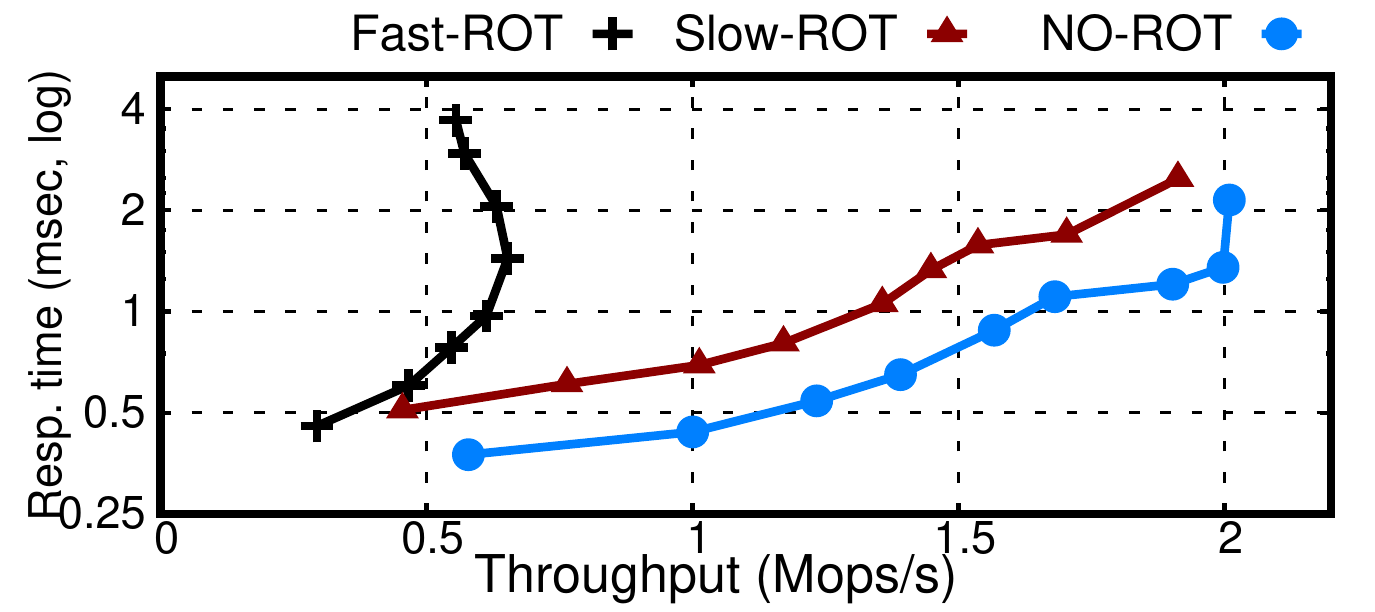}
    \caption{5\% writes}
    \end{subfigure}
  \begin{subfigure}[b]{0.32\textwidth}
    \includegraphics[width=\textwidth]{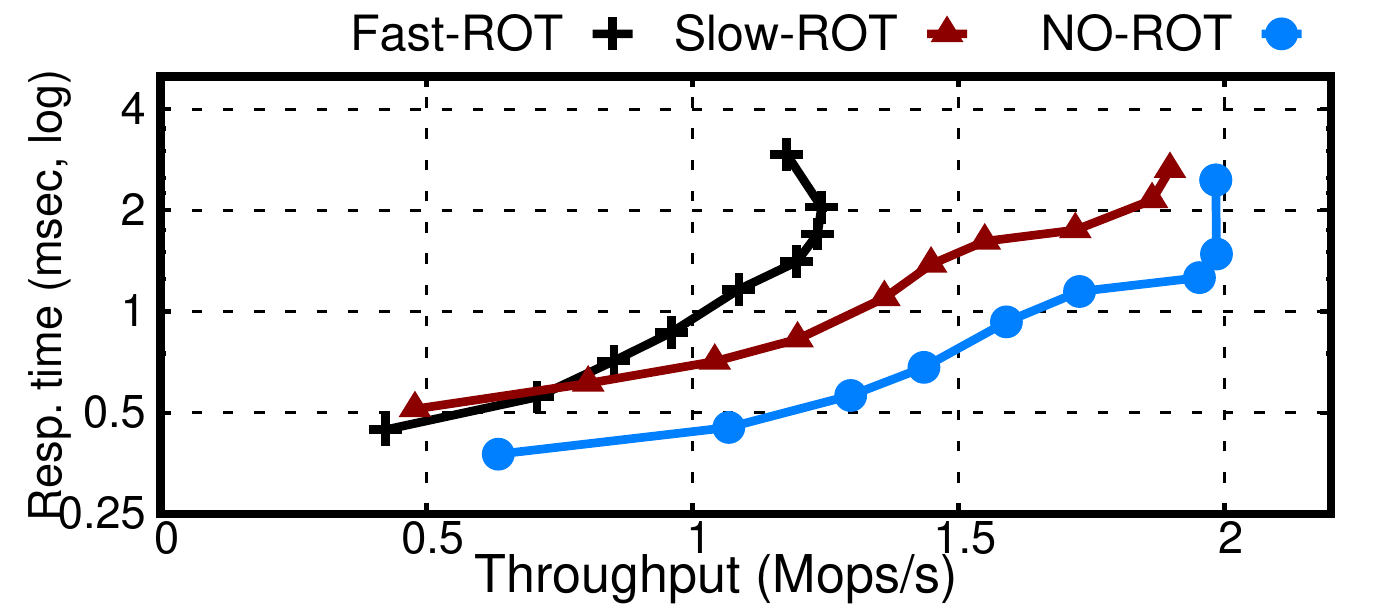}
    \caption{1\% writes}
    \end{subfigure}
  \begin{subfigure}[b]{0.32\textwidth}
    \includegraphics[width=\textwidth]{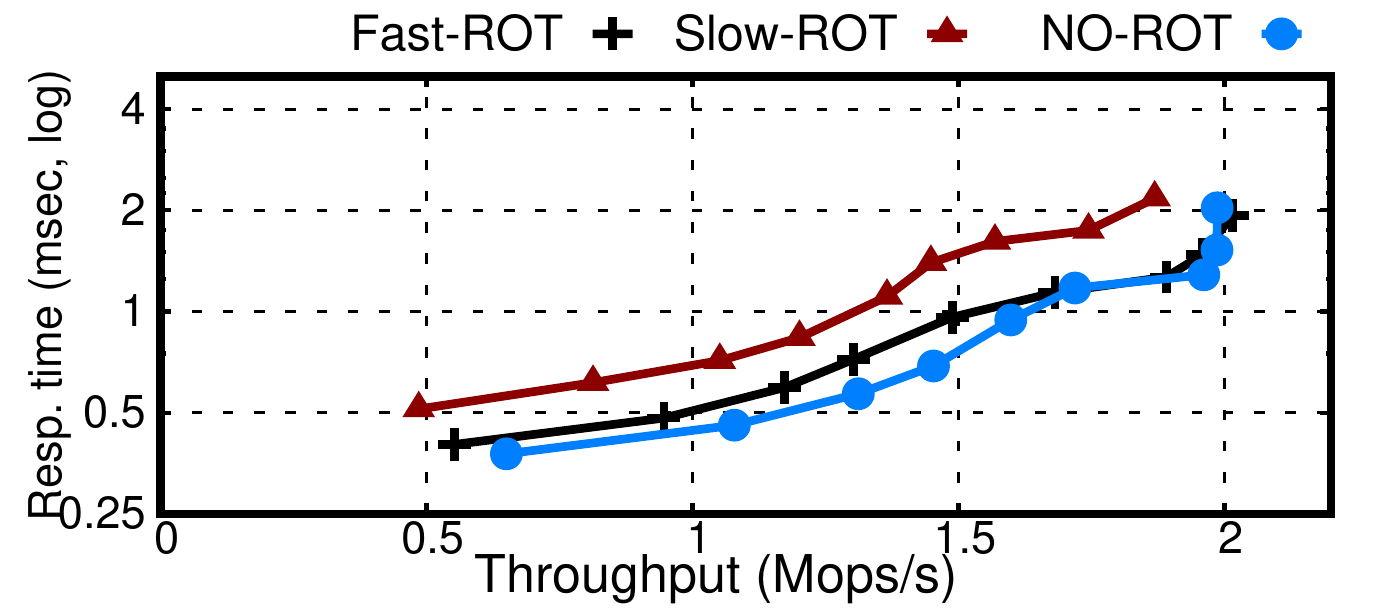}
    \caption{0.1\% writes}
    \end{subfigure}
    \caption{Performance of ``fast'', ``slow'' transactions, and no transaction guarantees}
    \label{fig:1dc}
\end{figure}

In this paper, we investigate these  aspects from a theoretical perspective with the aim of identifying possible and impossible causal consistency designs in order to ultimately understand their implications. We prove two impossibility results. 
\begin{itemize}
\item First, we prove that no causally consistent system can support generic transactions and implement fast read-only transactions. This result unveils  a fundamental trade-off between semantics (support for generic transactions) and performance (latency of read-only transactions).

\item Second, we prove that fast read-only transactions must be ``visible'', i.e., their execution updates the states of the involved servers. 
The resulting overhead   increases resource utilization, which sheds light on the inherent overhead of fast read-only transactions and explains the surprising result discussed before.
\end{itemize}

The main idea behind our first  impossibility result is the following. One round-trip message exchange disallows multiple servers to synchronize their responses to a client. Servers need to be conservative and return possibly stale values to the client in order to preserve causality, with the risk of jeopardizing  progress.
Servers have no choice but  communicate outside read-only transactions (i.e., helping each other) to make progress on the freshness of values. 
We show that such message exchange can cause an infinite loop and delay fresh values forever.
The  intuition  behind our second result is different. We show that a fast read-only transaction  has to ``write'' to some server for otherwise,  a server can miss the information that a stale value has been returned for some object by the transaction (which reads multiple objects), and return a fresh value for some other, violating causal consistency.

At the heart of our results lies essentially a fundamental trade-off between  causality and (eventual) freshness of values.\footnote{This trade-off is different from the traditional one in distributed computing between ensuring linearizability (i.e., finding a linearization point) and ensuring wait-freedom, both rather strong properties.} Understanding this trade-off is key to paving the path towards a new generation of transactional storage systems. Indeed,
the relevance of our results goes beyond the scope of causal consistency. They apply to any consistency model   stronger than causal consistency, e.g., linearizability \cite{herlihy_linearizability_1990, lamport_interprocess_1986}  and strict serializability  \cite{bernstein_concurrency_1987, papadimitriou_serializability_1979}, and are relevant also for systems that implement hybrid consistency models that include causal consistency, e.g., Gemini~\cite{osdi} and Indigo~\cite{other-from-balegas-esys}.

The rest of this paper is organized as follows.
Section 2 presents our general  model and definitions.
Section 3 presents the impossibility of fast read-only transactions.
Section 4 presents the impossibility of fast invisible read-only transactions (in the restricted model).
Section 5 discusses related work. 
Section 6 discusses how to circumvent our impossibility results. 
For space limitation, we defer the details of the proofs of our impossibility results  to the appendix.

\section{Model and Definitions}

\subsection{Model}
We assume an arbitrarily large number of \emph{clients} $C_1, C_2, C_3, \ldots$ (sometimes also denoted by $C$), and at least two \emph{servers} $P_X, P_Y$. Clients and servers interact by exchanging messages.
We consider an \emph{asynchronous} system where the delay on message transmission is finite but arbitrarily large, and there is no global clock accessible to any process.
Communication channels do not lose, modify, inject, or duplicate messages, but messages could be reordered.

A \emph{storage} is a finite set of objects. 
Clients read and/or write objects in the storage via \emph{transactions}. Any transaction $T$ consists of a read set $R_T$ and a write set $W_T$ on an arbitrary number of objects ($R_T$ or $W_T$ could be empty). We denote $T$ by $(R_T, W_T)$.
We say that a client \emph{starts} a transaction when the client \emph{requests} the transaction from the storage.
Any client which requests transaction $T$ returns a value for each read in $R_T$ and  $ok$ for each write in $W_T$. 
We say that a client \emph{ends} a transaction when the client returns from the transaction. Every transaction ends.

The storage is implemented by servers. For simplicity of presentation,
we assume that each server stores a different set of objects and the set is disjoint among servers.
(We show in the appendix how our results apply to the non-disjoint case.)
Every server receiving a request from a client responds. A server's response without any client request is not considered, and no server receives requests for objects not stored on that server. Naturally, a server that does not store an object stores no information on values written to that object. Clients do not buffer the value of an object to be read; instead a server returns one and only one value which has been written to the object in question. 

\subsection{Causality}

We consider a storage that ensures \emph{causality} in the classical sense of \cite{ahamad_causal_1995}, which we first recall and adapt to a transactional context.

The \emph{local history} of client $C_i$, denoted  $L_i$, is a sequence of start and end events.
We assume, w.l.o.g., that any client starts a new transaction after the client has ended all previous transactions, i.e., clients are sequential. Hence any local history $L_i$ can be viewed as a sequence of transactions.
We denote by $r(x)v$ a read on object $x$ which returns $v$, by $r(x)*$ a read on object $x$ for an unknown return value (with symbol $*$ as a place-holder), and by $w(x)v$ a write of $v$ to object $x$. For simplicity, we assume that every value written is unique. (Our results hold even when the same values can be written.) Definition \ref{def:ahamad_causal} captures the program-order and read-from causality relation \cite{ahamad_causal_1995}.

\begin{definition}[Causality \cite{ahamad_causal_1995}]
\label{def:ahamad_causal}
Given local histories $L_1, L_2, L_3\ldots$, for any $\alpha=a(x_\alpha)v_\alpha,  \beta=b(x_\beta)v_\beta$ where $a,b\in\{r,w\}$, we say that $\alpha$ causally precedes $\beta$, which we denote by $\alpha \rightsquigarrow \beta$, if
(1) $\exists i$ such that $\alpha$ is before $\beta$ in $L_i$; or
(2) $\exists v, x$ such that $\alpha = w(x)v$ and $\beta = r(x)v$; or
(3) $\exists \gamma$ such that $\alpha\rightsquigarrow \gamma$ and $\gamma\rightsquigarrow\beta$.
\end{definition}

Our definition of causally consistent transactions follows closely the original definition of \cite{ahamad_causal_1995}. We only slightly extend the classical definition of causal serialization in \cite{ahamad_causal_1995} to cover transactions. 
Assume that each object is initialized with a special symbol $\bot$. (Thus a read can be $r(x)\bot$.)

\begin{definition}[Transactional causal serialization]
\label{def:serialization}
Given local histories $H = L_1, L_2, L_3, \ldots$, we say that client $C_i$'s history can be causally serialized if we can totally order all transactions that contain a write in $H$ and all transactions in $L_i$, such that
(1) for any read $r = r(x)v$ on object $x$ which returns a non-$\bot$ value $v$, the last write $w(x)v_w$ on $x$ which precedes the transaction that contains $r$ satisfies $v_w = v$;
(2) for any read $r = r(x)\bot$ on object $x$, no write on $x$ precedes the transaction that contains $r$;
(3) for any $\alpha, \beta$ such that $\alpha\rightsquigarrow \beta$, the transaction that contains $\alpha$ is ordered before the transaction that contains $\beta$.
\end{definition}

\begin{definition}[Causally consistent transactional causal storage]
\label{def:causal-consistency}
We say that storage $cc$ is causally consistent if for any execution of clients with $cc$, each client's local history can be causally serialized.
\end{definition}

\subsection{Progress}

\emph{Progress} is necessary to make any storage useful; otherwise, an implementation which always returns $\bot$ or values written by the same client can trivially satisfy causal consistency. To ensure progress, we require any value written to be eventually  \emph{visible}. 
While rather weak, this definition is strong enough for our impossibility results, which  apply to stronger definitions.
We formally define progress in Definition \ref{def:visible} below. Existing implementations of causal consistency \cite{lloyd_settle_2011, almeida_chainreaction_2013, lloyd_stronger_2013, du_gentlerain_2014, akkoorath_cure_2016, mehdi_occult_2017, bravo_saturn_2017} indeed used the terminology of visible writes/updates/values and implicitly included progress as a property of their causally consistent systems, yet there has been no formal definition for progress.\footnote{Bailis et al. \cite{bailis_bolt-on_2013} defined eventual consistency in a similar way to progress here; however they considered progress only in the situation where all writes can stop.} 

\begin{definition}[Progress]
\label{def:visible}
A (causally consistent) storage guarantees progress if,  
for any write $w = w(x)v$, $v$ is eventually visible: there exists finite time $\tau_{x,v}$ such that any read $r(x)v_{new}$ which starts at time $t \geq \tau_{x,v}$, satisfies $v_{new} = v$ or $w(x)v_{new}$ returns no earlier than $w$ starts.\footnote{The accurate time is used for the ease of presentation for definitions and proofs and not accessible to any process.} 
\end{definition}

\section{The Impossibility of Fast Transactions} 
In this section, we present and prove our first theoretical result, Theorem \ref{thm:imp}. We first define formally the notion of \emph{fast} transactions.
In short, a fast transaction is one of which each  operation executes in (at most) one communication round between a client and a server (Definition \ref{def:fast-op} below). 

\begin{definition}[Fast transaction]
\label{def:fast-op}
We say that transaction $T$ is fast if for any client $C$ and $C$'s invocation $I$ of $T$, there is an execution where $I$ ends and during $I$, for any server $P$:
\begin{itemize}
\item $C$ sends at most one message to $P$ and receives at most one message from $P$;
\item If $C$ sends a message to $P$, then after the reception of that message,  any message which $P$ sends to a server is delayed and $P$ receives no message from any server until $I$ ends. 
\end{itemize}
\end{definition}

Definition \ref{def:fast-op} excludes implementations where 
a server  waits for the reception  of messages from another server (whether the server is one which $C$ sends a message to or not) to reply to a client.
Definition \ref{def:fast-op} allows parallel transactions.

\subsection{Result}
Theorem \ref{thm:imp} says that it is impossible to implement fast transactions (even if just read-only ones are fast).

\begin{theorem}
\label{thm:imp}
If a causally consistent transactional storage provides transactions that can read and/or write multiple objects, then no implementation provides fast read-only transactions.
\end{theorem}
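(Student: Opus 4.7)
The plan is a proof by contradiction: suppose $\mathcal{A}$ is causally consistent, supports generic (multi-object, read-write) transactions, and provides fast read-only transactions. The key lemma I would first state is a \emph{response-locality} consequence of Definition \ref{def:fast-op}: during a fast invocation $I$ of $T_r$ by client $C$, any contacted server $P$ exchanges no messages with any other server between the arrival of $C$'s request and the end of $I$. Thus $P$'s reply to $C$ is a deterministic function of $C$'s request together with $P$'s local state at the instant of arrival, and any two executions indistinguishable to $P$ up to that instant must elicit the same reply; this is the handle I will use to transfer constraints between executions.

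Next, I would use a generic read-write transaction to couple two servers asymmetrically. Specifically, after a write $w(x)v$, let $C_1$ perform $T = (r(x)v, w(y)u)$, creating $w(x)v \rightsquigarrow w(y)u$; only $P_Y$ learns, via causal metadata accompanying $T$'s write, that shipping $u$ is conditional on $v$'s visibility at $P_X$, while $P_X$ is never informed of $u$ or its dependency. For a subsequent fast $T_r = (r(x), r(y))$, Definition \ref{def:serialization} admits only the reply pairs $(\bot,\bot)$, $(v,\bot)$, and $(v,u)$: the pair $(\bot,u)$ would serialize $T$, and hence $w(x)v$, before $T_r$, contradicting $r(x)\bot$. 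Response locality forces $P_X$ and $P_Y$ to assemble the pair independently, so producing $(v,u)$ requires $P_Y$ to have received, strictly before $T_r$, an inter-server certification that $P_X$ is prepared to ship $v$; meanwhile $(\bot,\bot)$ and $(v,\bot)$ are excluded by Definition \ref{def:visible} once $T_r$ arrives after $\tau_{y,u}$.

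To produce a genuine contradiction against the \emph{eventual} nature of progress, I would iterate the construction into an infinite chain $w(x)v_1 \rightsquigarrow w(y)u_1 \rightsquigarrow w(x)v_2 \rightsquigarrow w(y)u_2 \rightsquigarrow \cdots$ built from further generic read-write transactions, and schedule the asynchronous network adversarially. The target schedule is one in which every inter-server certification $\mathcal{A}$ tries to ship at stage $n$ is overtaken in the sender's local state by a tighter stage-$(n+1)$ dependency before the certification is delivered; combined with response locality applied to a twin execution that delays the stage-$n$ certification slightly further, this produces at every sufficiently late time $t$ a write $u_n$ whose progress deadline has elapsed but whose companion server is indistinguishably stuck in a state from which it cannot safely ship $u_n$ without violating causal serialization. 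The main obstacle is building this infinite schedule: each individual message delay must be finite, so the adversary cannot freeze any one certification forever, and the schedule must instead stage countably many finite delays together with countably many freshly introduced dependencies in a way that is admissible in the asynchronous model, that keeps every write's completion within a finite prefix, and that provably keeps the gap between progress deadlines and response-locality-allowable replies open at every stage.
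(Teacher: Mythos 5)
Your single-shot argument (the ``response-locality'' handle plus the observation that a mixed reply such as $(\bot,u)$ is forbidden by causal serialization while the all-old replies are forbidden by progress after $\tau_{y,u}$) is essentially the paper's base case: it establishes that some inter-server message is necessary after the dependent writes, and the paper proves exactly this via Lemma \ref{lma:atomicity} and an indistinguishability argument. The genuine gap is in how you try to turn this into a contradiction with \emph{eventual} progress. Your plan introduces an infinite chain of fresh writes $w(x)v_1 \rightsquigarrow w(y)u_1 \rightsquigarrow w(x)v_2 \rightsquigarrow \cdots$ and hopes to schedule finite delays so that every certification is ``overtaken'' by a newer dependency. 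But Definition \ref{def:visible} is existentially quantified per write and explicitly tolerates returning values whose writes are at least as recent: with an unbounded stream of new writes, an implementation can simply lag one stage behind, declaring $\tau_{y,u_n}$ to be (say) shortly after the stage-$n$ certification arrives, which happens at a finite time because every message delay in your schedule is finite. No single write's deadline is ever violated, so no contradiction with progress is reached; you acknowledge this obstacle yourself but do not resolve it, and as stated the construction does not go through.

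The paper closes precisely this hole by \emph{freezing} the write workload: after one write-only transaction $WOT=(w(X)x,w(Y)y)$ at time $t_w$, no further writes occur, so progress now demands that reads eventually return exactly $x$ and $y$. The induction (Propositions \ref{prop:base}, \ref{prop:k}, \ref{prop:k-2}, via Lemma \ref{lma:helper}) then shows that the two servers must exchange an infinite sequence of messages $m_0,m_1,m_2,\ldots$ in which each $m_{k+1}$ is sent only after a message preceded by $m_k$ is received; hence the times $T_k$ before which a suitably scheduled fast $ROT$ still cannot return $x$ or $y$ grow without bound, contradicting the existence of any finite visibility time for the \emph{fixed} pair $(x,y)$ --- while every individual message delay remains finite, so the execution is admissible. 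If you want to salvage your route, you would need either to stop introducing new writes (at which point you are reconstructing the paper's induction, where the coupling device is the atomicity of the single write-only transaction rather than a read-from chain) or to prove a much stronger scheduling lemma showing some fixed write's visibility is pushed past every finite bound despite the ongoing writes, which your current argument does not provide.
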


The intuition behind Theorem \ref{thm:imp} is the following. Consider a server $P_X$ that stores object $X$ and a server $P_Y$ that stores object $Y$. If there is a risk of violating causality for $P_Y$ where  $P_X$  could return an old value, then $P_Y$ must also return an old value to the same transaction. In order to guarantee progress, extra communication is needed, which could further delay $P_Y$ from returning a new value, in turn, creating a risk of violating causality for $P_X$. In fact, $P_X$ and $P_Y$ could take turns creating causality violation risks for each other,  and preventing each other from returning new values forever, jeopardizing thereby progress.
For space limitation, we just sketch below our proof of Theorem \ref{thm:imp}. (The full proof  is deferred to the appendix.)

\subsection{Proof overview}

The proof of Theorem \ref{thm:imp} is by construction of a contradictory  execution $E_{imp}$ which, to satisfy causality, contains an infinite number of messages the reception of which is necessary for some value to be visible (violating progress).
As illustrated in Figure \ref{fig:eimp}, some non-$\bot$ values of $X$ and $Y$ have been visible in $E_{imp}$; then client $C_w$ issues transaction $WOT = (w(X)x, w(Y)y)$ which starts at  time $t_w$; since $t_w$, $WOT$ is the only executing transaction. We make no assumption on the execution of $WOT$.

We show an infinite number of messages by induction on the number $k$ of messages: no matter how many $k$ messages have been sent and received, an additional message is necessary for $x$ and $y$ to be visible. 
Let $m_0, m_1, \ldots, m_{k-1}, m_k$ be the sequence of messages for case $k$. We show that in $E_{imp}$, except for $m_0$ and $m_1$, every message is sent after the previous message has been received. At the end of the induction, we conclude that in $E_{imp}$, these messages delay both $x$ and $y$ from being visible. As every message is sent after previous messages have been received, the delay accumulates and thus violates Definition \ref{def:visible}.
We sketch below the proof of the base case and the inductive step.

\begin{figure}[!h]
    \centering
  \begin{subfigure}[b]{0.47\textwidth}
    \includegraphics[width=\textwidth]{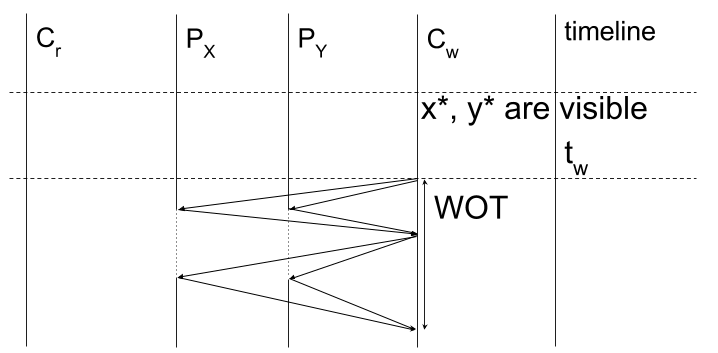}
    \caption{Construction of $E_{imp}$}
    \label{fig:eimp}
    \end{subfigure}
    ~~~~
  \begin{subfigure}[b]{0.47\textwidth}
    \includegraphics[width=\textwidth]{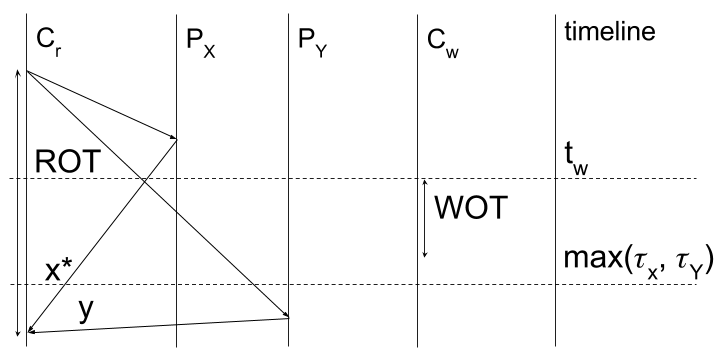}
    \caption{Contradictory execution for the existence of the first two messages}
    \label{fig:m0m1}
    \end{subfigure}
    \caption{Illustration of $E_{imp}$ and the base case}
    \label{fig:no-name}
\end{figure}

\subsection{Base case}
We first define some terminology to unify the description of communication between $P_X$ and $P_Y$ no matter whether the communication is via some third server or not: we say that $P_X$ ($P_Y$) sends a message which \emph{precedes} some message that arrives at $P_Y$ ($P_X$), in the sense defined below (Definition \ref{def:order}). Thus the case where $P_X$ sends message $m$ to server $S$ and $S$ forwards $m$ to $P_Y$ is covered.

\begin{definition}
\label{def:order}
Message $m_1$ \emph{precedes} message $m_2$ if (1) $m_1 = m_2$, or (2) a process sends $m_2$ after it receives $m_1$ or (3) there exists message $m$ such that $m_1$ precedes $m$ and $m$ precedes $m_2$.
\end{definition}

In the base case where $k = 1$, we show that after $t_w$, each server sends a message that precedes some message which arrives at the other server, by contradiction. By symmetry, suppose that $P_X$ sends no message that precedes any message which arrives at $P_Y$. Then we add a read-only transaction $ROT$ to $E_{imp}$,  illustrated in Figure \ref{fig:m0m1}: to $P_X$, the request of $ROT$ is earlier than that of $WOT$, while to $P_Y$, the request of $ROT$ is (much) later and is actually after $x$ and $y$ are eventually visible. By fast read-only transactions and our assumption for contradiction, after $t_w$, there can be no communication between $P_X$ and $P_Y$  before $P_Y$'s response. As a result, $ROT$ returns $(x^*, y)$ for some $x^*\neq x$. Lemma \ref{lma:atomicity} (of which the proof is also deferred to the appendix) depicts the very fact that such returned value violates causal consistency. A contradiction.
By symmetry, we conclude that both $P_X$ and $P_Y$ have to send at least one message after $t_w$. These two messages are $m_0$ and $m_1$. Let $\{P,Q\} = \{P_X, P_Y\}$. Clearly, one server $P$ between $P_X$ and $P_Y$ sends its message earlier than the other server $Q$. We let $m_0$ be the message sent by $P$ and $m_1$, the other message. 

\begin{lemma}
\label{lma:atomicity}
In $E_{imp}$, no write (including writes in a transaction) occurs other than $WOT$ since $t_w$. If some client $C_r$ requests $ROT$,
then $ROT$ returns $x$ if and only if $ROT$ returns $y$.
\end{lemma}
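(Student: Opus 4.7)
The first statement is immediate from the construction of $E_{imp}$: after $t_w$, $WOT$ is stipulated to be the only executing transaction, and since the model requires every write to be issued inside a transaction, no write other than $w(X)x$ and $w(Y)y$ can occur after $t_w$.

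For the second statement, by symmetry I plan to prove just one direction—that if $ROT$ returns $y$ for $Y$, then $ROT$ returns $x$ for $X$—and obtain the other by swapping the roles of $X$ and $Y$. I will argue by contradiction: suppose $ROT$ returns $y$ for $Y$ but some value $v \ne x$ for $X$. The read-from clause of Definition~\ref{def:ahamad_causal} gives $w(Y)y \rightsquigarrow r(Y)y$, and since $w(Y)y \in WOT$ and $r(Y)y \in ROT$, transitivity yields $WOT \rightsquigarrow ROT$. Fixing any candidate causal serialization $\sigma$ of $L_r$ (Definition~\ref{def:serialization}), condition~(3) places $WOT$ before $ROT$ in $\sigma$, so the write $w(X)x \in WOT$ also precedes $ROT$ in $\sigma$. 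The case $v = \bot$ is ruled out by condition~(2); for non-$\bot$ $v$, condition~(1) forces the last $X$-write preceding $ROT$ in $\sigma$ to be some $w(X)v$ in a transaction $T^\ast \ne WOT$ lying strictly between $WOT$ and $ROT$ in $\sigma$. By Part~1, $T^\ast$ must have completed entirely before $t_w$.

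The crux is to rule out any such $T^\ast$. The plan is to exhibit a causal chain $T^\ast \rightsquigarrow WOT$, which by condition~(3) would pull $T^\ast$ before $WOT$ in $\sigma$, contradicting its required placement between $WOT$ and $ROT$. To do this, I intend to combine the progress hypothesis (Definition~\ref{def:visible}) with the structure of $E_{imp}$: because $P_X$ delivers $v$ before $P_X$ has applied $WOT$, and because every non-$\bot$ value $P_X$ can return at that moment must be one whose write has already been made visible in $E_{imp}$, the write $T^\ast$ was visible before $t_w$; the specific inductive assembly of $E_{imp}$ then ensures that the messages responsible for that visibility leave a causal trail terminating in $WOT$. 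The subtle point, and what I expect to be the main obstacle, is formalizing this last clause: the model does not by itself force a fresh client $C_w$ to depend causally on old writes, so the proof must exploit exactly how $E_{imp}$ is constructed—i.e., the prior communication that established visibility—to extract the chain $T^\ast \rightsquigarrow WOT$ rather than merely asserting it.
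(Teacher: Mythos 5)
Your Part~1 and the first half of Part~2 (the read-from edge $w(Y)y \rightsquigarrow r(Y)y$ forcing $WOT$ before $ROT$ in the serialization, $\bot$ ruled out by condition~(2), and the stale write sitting in some $T^\ast$ placed strictly between $WOT$ and $ROT$) match the paper. But the way you plan to finish is a genuine gap, and you have correctly sensed where it is: there is no causal chain $T^\ast \rightsquigarrow WOT$ to be extracted. The relation $\rightsquigarrow$ of Definition~\ref{def:ahamad_causal} is generated only by client program order and read-from; server-to-server messages and the mere fact that $x^\ast$ was \emph{visible} before $t_w$ create no $\rightsquigarrow$ edges, and $C_w$ is not assumed to have read anything before issuing $WOT$. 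Worse, the contradiction you are after does not exist: a serialization of $C_r$'s history that places $T^\ast$ after $WOT$ and before $ROT$ is perfectly consistent with causal consistency alone, so the return value $(x^\ast, y)$ cannot be refuted by staring at $E_{imp}$ up to $ROT$. The lemma is false without progress, so any proof must use Definition~\ref{def:visible} in an essential way, and not merely to argue about which values were visible before $t_w$.

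The missing idea (the paper's argument) is to extend the execution: let the \emph{same} client $C_r$ issue a second read-only transaction $ROT_2$ after $x$ and $y$ are eventually visible. By progress, $ROT_2$ returns $(x,y)$. Since $ROT_2$ follows $ROT$ in $C_r$'s local history, any causal serialization must order $ROT_2$ after $ROT$, hence after the transaction containing $w(X)x^\ast$, which your own argument has already forced after $w(X)x$. Then condition~(1) of Definition~\ref{def:serialization} fails for $ROT_2$'s read of $X$: its last preceding write on $X$ cannot be $w(X)x$. That is the contradiction; it replaces the unobtainable chain $T^\ast \rightsquigarrow WOT$ with an ordering clash manufactured by a future transaction of $C_r$, which is exactly the role progress plays here.
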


\subsection{Inductive step}

From case $k = 1$ to case $k=2$, we show that another message $m_2$ is necessary (for the value written at $Q$ to be visible). Let $m$ be the first message which $P$ receives and $m_1$ precedes. We argue by contradiction. Suppose that after the reception of $m$, $P$ sends no message that precedes any message which arrives at $Q$. If the request of $ROT$ comes at $P$ after $P$ sends $m_0$ and before $P$ receives $m$,  then by Lemma \ref{lma:atomicity}, $P$ must return some $x^*\neq x$ or some $y^*\neq y$, considering the possibility that the request of $ROT$ could come at $Q$ before $t_w$, illustrated in Figure \ref{fig:oldvalue}. 
Now the request of $ROT$ actually comes (much) later at $Q$ (after the value written at $Q$ is visible), illustrated in Figure \ref{fig:newmsg}.
By fast read-only transactions and our assumption for contradiction, after $Q$ receives all messages preceded by $m_0$, there can be no communication between $P$ and $Q$ before $Q$'s response. As a result, $Q$ returns $x$ or $y$ and $ROT$ returns $(x^*,y)$ or $(x,y^*)$, violating Lemma \ref{lma:atomicity}. A contradiction. 

\begin{figure}[!h]
    \centering
  \begin{subfigure}[b]{0.37\textwidth}
    \includegraphics[width=\textwidth]{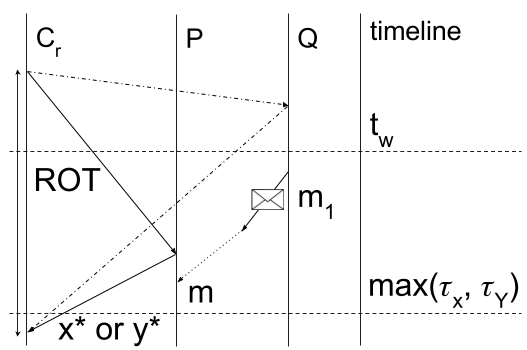}
    \caption{Return of old values}
    \label{fig:oldvalue}
    \end{subfigure}
    ~~~~
  \begin{subfigure}[b]{0.37\textwidth}
    \includegraphics[width=\textwidth]{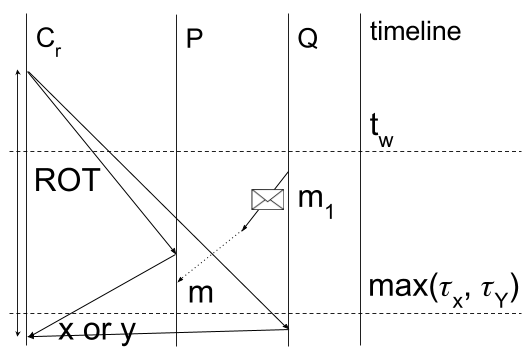}
    \caption{Contradictory execution}
    \label{fig:newmsg}
    \end{subfigure}
    \caption{Existence of more messages after $WOT$}
    \label{fig:no-name-3}
\end{figure}

We then conclude that $P$ must send $m_2$ after the reception of $m$, which is no earlier than the reception of $m_1$. For case $k = 3$, we can similarly show that $Q$ must send another message $m_3$ (for the value written at $P$ to be visible).
In this way, we add one message in each step of the induction, while $P_X$ and $P_Y$ take turns in sending messages necessary for $x$ and $y$ to be visible.
As shown by induction, the total number of messages essentially grows to infinity. This completes our construction of $E_{imp}$ as well as our proof sketch of Theorem \ref{thm:imp}.

\section{The Impossibility of Fast Invisible Transactions}
As we pointed out in the introduction, some systems considered a restricted model where all transactions are read-only and write operations are supported only outside the scope of a transaction.
This restricted model also circumvents the impossibility result of Theorem \ref{thm:imp}.
In this model, we present our second theoretical result, Theorem \ref{thm:main}, stating that fast read-only transactions (while indeed possible) need to be visible (need to actually write). We first formally define the notion of (in)visible transactions in Definition \ref{def:trace} below.

\begin{definition}[Invisible transactions]
\label{def:trace}
We say that transaction $T$ is invisible if for any client $C$ and $C$'s invocation $I$ of $T$, any execution $E$ (until $I$) can be continued arbitrarily but still there exists some execution $E^-$ without $I$ that is the same as $E$ except for the message exchange with $C$ (during the time period of $I$). 
\end{definition}

\subsection{Result}
Theorem \ref{thm:main} shows that it is impossible to implement fast invisible transactions (even if all transactions are read-only). 

\begin{theorem}
\label{thm:main}
If a causally consistent transactional storage provides fast read-only transactions, then no implementation provides invisible read-only transactions.
\end{theorem}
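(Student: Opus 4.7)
My plan is to proceed by contradiction. Assume there is a causally consistent storage in the restricted model whose read-only transactions are both fast (Definition \ref{def:fast-op}) and invisible (Definition \ref{def:trace}). I will construct an adversarial schedule that forces some read-only transaction $ROT = r(X)r(Y)$ to return a pair of values violating causal consistency---a stale value for $X$ paired with a fresh value for $Y$ whose causal prerequisites include a version of $X$ newer than the one returned, matching the ``stale-$X$, fresh-$Y$'' pattern of the paper's intuition.

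The first step is to extract an operational consequence of Definition \ref{def:trace}: since removing the invocation of an invisible $ROT$ must leave every event outside the client--server exchange of $ROT$ unchanged, each server $P$'s response during $ROT$ is fully determined by $P$'s local state at the arrival of $ROT$'s request, and that state is shaped only by writes addressed to $P$ and by server-to-server messages not caused by $ROT$. Two executions whose inputs to $P$ agree up to the arrival of $ROT$ therefore elicit the same response from $P$, regardless of what has happened at other servers.

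The second step builds the two executions that drive the contradiction. I install a causal chain from $X$ to $Y$ to $X$---concretely $w(X)x_1 \rightsquigarrow w(Y)y \rightsquigarrow w(X)x_2$---via one or two bridge clients that use earlier invisible read-only transactions to thread the dependencies. I then fork an adversarial execution $E_1$ (where the full chain unfolds but the write $w(X)x_2$ and any correlated background traffic to $P_X$ are delayed) from a companion $E_2$ (where the chain is truncated so $w(X)x_2$ never happens), choosing the time of the probe $ROT$ so that (i) $P_Y$ must return $y$ in $E_1$ by Progress (Definition \ref{def:visible}), and (ii) $P_X$'s input stream up to the arrival of $ROT$ is identical in $E_1$ and $E_2$. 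By the locality consequence of Step~1, $P_X$ returns the same value $\bar{x}$ in both executions; combining this with an atomicity-style argument analogous to Lemma \ref{lma:atomicity}, the forced $\bar{y} = y$ in $E_1$ together with $P_X$'s $E_2$-determined $\bar{x}$ yields a causality-violating pair in $E_1$, contradicting the assumed causal consistency.

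The main obstacle is ruling out implementations that might avoid the contradiction through clever background server-to-server messages: if $P_Y$ can notify $P_X$ about the application of $w(Y)y$, then $P_X$'s state in $E_1$ may fail to be indistinguishable from its state in $E_2$. The full proof---in the diagonalization style of Theorem \ref{thm:imp}---must show that any such message scheme either carries per-$ROT$ metadata (thereby making some $ROT$ visible and contradicting Definition \ref{def:trace}) or can be adversarially delayed in $E_2$ without violating Progress, which by the locality consequence forces the same delay-tolerant behavior in $E_1$. A careful case analysis over candidate background-message strategies, combined with the right choice of chain length and of the timing of $ROT$, is where I expect the technical heart of the proof to reside.
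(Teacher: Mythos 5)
There is a genuine gap, and it is exactly where you locate ``the technical heart'': the handling of background server-to-server traffic. In the paper's proof this obstacle never arises, because the invisibility hypothesis (Definition \ref{def:trace}) is used in a different and stronger way than your Step~1 ``locality'' extraction. One takes $E_1$ to be the execution in which $C_r$ runs $ROT$ \emph{before} the writes and in which $w(X)x\rightsquigarrow w(Y)y$ are issued only after $ROT$ ends; invisibility yields an execution $E_2$ \emph{without} the invocation of $ROT$ that agrees with $E_1$ on everything except the exchange with $C_r$ --- in particular on all server-to-server messages, on both writes, and on the time $\tau_y$ at which $y$ becomes visible. The hybrid $E_{1,2}$ then delays only $C_r$'s request message to $P_Y$ until after $\tau_y$: $P_X$ answered before any write started (so by the one-version property it cannot return $x$), while $P_Y$ cannot distinguish $E_{1,2}$ from $E_2$ extended with a fresh $ROT$ issued after $\tau_y$, where Progress (Definition \ref{def:visible}) forces $y$. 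Hence $ROT$ returns $(x^*,y)$ and a Lemma~\ref{lma:atomicity}-style serialization argument gives the contradiction. Because the $ROT$-free execution is handed to you by the hypothesis itself, no ``case analysis over candidate background-message strategies'' is needed; leaving that analysis as an expectation means your argument is not actually closed, and the difficulty you are trying to case-split on is an artifact of not invoking invisibility at the point where it does the work.

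Independently, your concrete construction does not produce the violation you announce. Your $E_1$ and $E_2$ differ in whether $w(X)x_2$ occurs, but in both of them $P_X$ has already processed $w(X)x_1$ when the probe arrives, so your indistinguishability step at best forces the \emph{same} $\bar{x}$ in both executions, and $\bar{x}=x_1$ paired with $y$ is perfectly serializable --- nothing forces $P_X$ to return a value older than $x_1$, whereas in the paper staleness at $P_X$ is forced because $P_X$ replies before the writes exist. Two further points: Progress cannot be applied directly inside $E_1$ to make $P_Y$ return $y$ for a transaction that started before the writes (it constrains only reads that \emph{start} after $\tau_{Y,y}$; the fresh $y$ must be obtained via indistinguishability with a late $ROT$ in the extension of the $ROT$-free execution); and your Step~1 claim that identical input streams elicit identical responses presupposes deterministic servers, which neither the model nor Definition \ref{def:trace} (an existential statement about \emph{some} matching execution $E^-$) licenses --- the paper instead pins down return values through explicit message scheduling together with the one-version property.
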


The intuition of Theorem \ref{thm:main} is the following. In an asynchronous system, any read-only transaction $T$ can read an old value and a new value from different servers, and thus the communication that carries $T$ is necessary to prevent $T$ from returning a mix of old and new values.
For space limitation, below we sketch our proof of Theorem \ref{thm:main}. (The full proof is deferred to the appendix.)

\subsection{Crucial executions}
To prove Theorem \ref{thm:main}, 
we consider any execution $E_1$ where some client $C_r$ (which has not requested any operation before) starts transaction $ROT = (r(X)*, r(Y)*)$ at the  time $t_0$. In $E_1$, before $t_0$, some values of $X$ and $Y$ have been visible. We continue $E_1$ with some client $C$ executing $w(X)x$ and $w(Y)y$ (which establishes $w(X)x\rightsquigarrow w(Y)y$).

Our proof is by contradiction. Suppose that transaction $ROT$ is invisible. Then no matter how $E_1$ is scheduled, there exists some execution $E_2$ such that $E_2$ is the same as $E_1$ except that (1) $C_r$ does not invoke $ROT$, and (2) the message exchange with $C_r$ during the time period of $ROT$ is different. Below we first schedule $E_1$ and then construct another execution $E_{1,2}$. We later show $E_{1,2}$ violates causal consistency.
By fast read-only transactions, we can schedule messages such that the message which $C_1$ sends during $ROT$ arrives at $P_X$ and $P_Y$ respectively at the same time. Let $T_1$ denote this time instant and let $T_2$ be the time when $ROT$ eventually ends, illustrated in Figure \ref{fig:edi-new}. During $[T_1, T_2]$, $P_X$ and $P_Y$ receive no message but still respond to $C_r$.
After $T_2$, the two writes of $C$ occur, while $C_r$ does no operation.
All delayed messages eventually arrive before $y$ can be visible. In $E_1$, $y$ is visible after some time $\tau_y$.

\begin{figure}[!h]
    \centering
  \begin{subfigure}[b]{0.44\textwidth}
    \includegraphics[width=\textwidth]{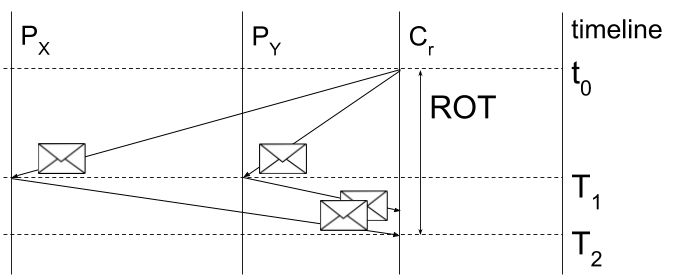}
    \caption{Message schedule of $E_1$}
    \label{fig:edi-new}
    \end{subfigure}
  \begin{subfigure}[b]{0.44\textwidth}
    \centering
    \includegraphics[width=\textwidth]{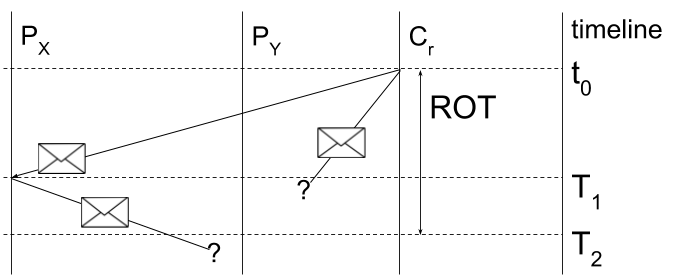}
    \caption{Message schedule of $E_{1,2}$}
    \label{fig:ed1d2-new}
    \end{subfigure}
    \caption{Construction and extension of $E_1$}
    \label{fig:no-name-2-new}
\end{figure}

Next we construct execution $E_{1,2}$ that is indistinguishable from $E_1$ to $P_X$ and from $E_2$ to $P_Y$. The start of $E_{1,2}$ is the same as $E_1$ (as well as $E_2$) until $t_0$.
At $t_0$, $C_r$ still invokes $ROT$.
As illustrated in Figure \ref{fig:ed1d2-new}, $P_X$ receives the same message from $C_r$ and sends the same message to $C_r$ at the same time as in $E_1$;
$C_r$ sends the same message to $P_Y$ at the same time as in $E_1$, the reception of which is however delayed by a finite but unbounded amount of time. In addition, during $[T_1, T_2]$, $P_X$ and $P_Y$ receive no message as in $E_1$ (as well as $E_2$).
Thus by $T_2$, $P_X$ is unable to distinguish between $E_1$ and $E_{1,2}$ while $P_Y$ is unable to distinguish between $E_2$ and $E_{1,2}$. 
According to our assumption for contradiction, $E_{1,2} = E_1 = E_2$ except for the communication with $C_r$ by $T_2$.

\subsection{Proof overview}

We continue our proof of Theorem \ref{thm:main} (by contradiction). Based on the executions constructed above, we extend $E_2$ and $E_{1,2}$ after $\tau_y$. As illustrated in Figure \ref{fig:extend-new}, we let $C_r$ start $ROT$ immediately after $\tau_y$ in $E_2$. 
In both $E_2$ and $E_{1,2}$, by fast read-only transactions, we schedule the message sent from $C_r$ to $P_Y$ during $C_r$'s $ROT$ to arrive at the same time after $\tau_y$, and $\exists t$ such that during $[\tau_y, t]$, $P_Y$ receives no message but still responds to $C_r$. 
By $t$, $P_Y$ is unable to distinguish between $E_2$ and $E_{1, 2}$.

\begin{figure}[!h]
    \centering
  \begin{subfigure}[b]{0.4\textwidth}
    \includegraphics[width=\textwidth]{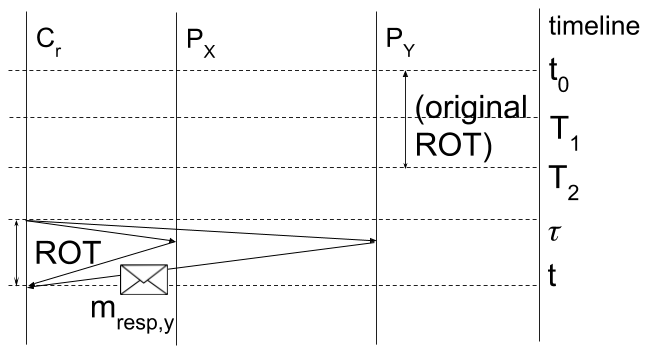}
    \caption{Extension of $E_2$}
    \label{fig:extend-e2-new}
    \end{subfigure}~~~~~~~~
  \begin{subfigure}[b]{0.36\textwidth}
    \includegraphics[width=\textwidth]{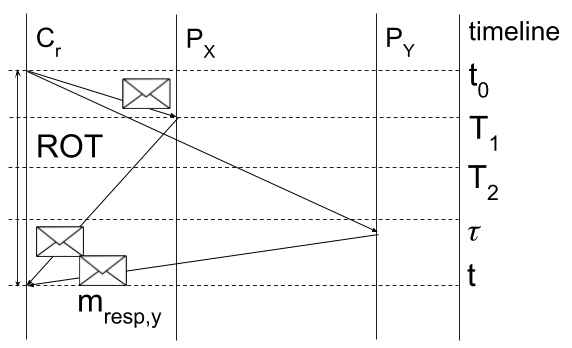}
    \caption{Extension of  $E_{1,2}$.}
    \label{fig:extend-ed1d2-new}
    \end{subfigure}
    \caption{Extension of crucial executions}
    \label{fig:extend-new}
\end{figure}

We now compute the return value of $ROT$ in $E_{1,2}$. By progress, in $E_2$, $P_Y$ returns $y$, and then by indistinguishability, in $E_{1,2}$, $P_Y$ also returns $y$.
Since in $E_{1,2}$, $P_X$ returns some value $x^*\neq x$ (as $w(X)x$ starts after $T_2$), the return value of $ROT$ in $E_{1,2}$ is $(x^*, y)$.
According to our assumption, $E_{1,2}$ satisfies causal consistency.
By Definition \ref{def:causal-consistency}, 
we can totally order all $C_r$'s operations and all write operations in $E_{1,2}$ such that the last preceding writes of  $X$ and $Y$ before $C_r$'s $ROT$ are $w(X)x^*$ and $w(Y)y$ respectively. 
This leads $w(X)x^*$ to be ordered after $w(X)x$. However, if we extend $E_{1,2}$ so that $C_r$ invokes $ROT_1 = (r(X)*, r(Y)*)$ after $x$ and $y$ are visible, then $ROT_1$ returns value $(x, y)$ and if we do total ordering of $E_{1,2}$ again, then the last preceding write of $X$ before $ROT_1$ must be $w(X)x$, contradictory to the ordering between $w(X)x^*$ and $w(X)x$.
Therefore, we conclude that $E_{1,2}$ violates causal consistency, which completes our proof sketch of Theorem \ref{thm:main}.

\section{Related Work}

\subsection{Causal consistency}
Ahamad et al. \cite{ahamad_causal_1995} were the first to propose \emph{causal consistency} for a memory accessed by read/write operations.
Bouajjani et al. \cite{bouajjani_verifying_2017} formalized the verification of causal consistency.
A large number of systems \cite{lloyd_settle_2011, almeida_chainreaction_2013, du_orbe_2013, lloyd_stronger_2013, du_gentlerain_2014, mehdi_occult_2017} implemented transactional causal consistency, although none formalized the concept for generic transactions.
Akkoorath et al. \cite{akkoorath_cure_2016} extended causal consistency to transactions by defining atomicity for writes and causally consistent snapshots for reads within the same transaction. 
Mehdi et al. \cite{mehdi_occult_2017} introduced \emph{observable causal consistency} in the sense that each client observes a monotonically non-decreasing set of writes. Neither of the two definitions follows a formalization close to the original definition of \cite{ahamad_causal_1995}.

\subsection{Causal read-only transactions}

Most implementations do not provide fast (read-only) transactions.
COPS \cite{lloyd_settle_2011} and Eiger \cite{lloyd_stronger_2013} provide a two-round protocol for read-only transactions.
Read-only transactions in Orbe \cite{du_orbe_2013}, GentleRain \cite{du_gentlerain_2014}, Cure  \cite{akkoorath_cure_2016} and Occult \cite{mehdi_occult_2017} can induce more than one-round communication.
Read-only transactions in ChainReaction \cite{almeida_chainreaction_2013} can induce more than one-round communication as well as abort and retry, resulting in more communication.
Eiger-PS \cite{lu_snow_2016} provides fast transactions and satisfies \emph{process-ordered serializability} \cite{lu_snow_2016}, stronger than causal consistency; yet in addition to the request-response of a transaction, each client periodically communicates with every server. Our Theorem \ref{thm:imp} explains Eiger-PS's additional communication. 
COPS-SNOW \cite{lu_snow_2016} provides fast read-only transactions but writes can only be performed outside a transaction; moreover, any read-only transaction in COPS-SNOW is visible, complying with our Theorem \ref{thm:imp} and Theorem \ref{thm:main}.
If each server stores a copy of all objects, then SwiftCloud \cite{zawirski_write_2015} provides fast read-only transactions. However, it is not clear how such a storage can  scale well with the growth of data given a single server. SwiftCloud considers the storage of a full copy among multiple servers and its resulting parallelism is an orthogonal issue \cite{zawirski_write_2015}.

\subsection{Impossibility results}

Existing impossibility results on storage systems have typically considered  stronger consistency properties than causality  or stronger progress conditions than eventual visibility. 
Brewer \cite{brewer_conjecture_2000} conjectured the CAP theorem that no implementation  guarantees \emph{consistency}, and  \emph{availability} despite \emph{partitions}.
Gilbert and Lynch \cite{gilbert_brewer_2002} formalized and proved Brewer's conjecture in \emph{partially synchronous} systems. They formalized consistency by \emph{atomic} objects \cite{lamport_interprocess_1986} (which satisfy \emph{linearizability} \cite{herlihy_linearizability_1990}, stronger than causal consistency). 
Considering a storage implemented by \emph{data centers} (clusters of servers), if any value written is \emph{immediately} visible to the reads at the same data center (as the write), and a client can access different objects at different data centers, Roohitavaf et al. \cite{roohitavaf_causalspartan_2017} proved the impossibility of ensuring causal consistency and  availability despite partitions. Their proof (as well as the proof of the CAP Theorem) rely on message losses.
Lu et al. \cite{lu_snow_2016} proved the SNOW theorem, saying that fast \emph{strict serializable}   transactions \cite{bernstein_concurrency_1987, papadimitriou_serializability_1979} (satisfying stronger consistency than causal consistency) are impossible. Their proof assume writes that are also fast, and does not imply our proof of impossibility results.

Mahajan et al. \cite{mahajan_consistency_2011}, Attiya et al. \cite{attiya_limitations_2017} as well as    Xiang and Vaidya \cite{xiang_lower_2017} proposed  related notions of causal consistency  based on the events at servers (rather than clients) motivated by replication schemes (an issue orthogonal to the problem considered in this paper).
More specifically, Mahajan et al. \cite{mahajan_consistency_2011} proved the CAC theorem that no implementation guarantees \emph{one-way convergence},\footnote{A progress condition based on the communication between servers.} availability, and any consistency stronger than \emph{real time causal consistency} assuming infinite local clock events and arbitrary message loss. 
Attiya et al. \cite{attiya_limitations_2017} proved that a \emph{replicated} store implementing multi-valued registers cannot satisfy any consistency strictly stronger than \emph{observable causal consistency}.\footnote{The definitions of observable causal consistency given by Mehdi et al. and Attiya et al.   \cite{attiya_limitations_2017, mehdi_occult_2017} are different.} 
Xiang and Vaidya \cite{xiang_lower_2017} proved that for \emph{replica-centric causal consistency}, it is necessary to track down writes.

\subsection{Transactional memory}
In the context of transactional memory, if the implementation of a read-only operation (in a transaction) writes a base shared object, then the read-only operation is said to be \emph{visible} and \emph{invisible} otherwise \cite{guerraoui_correctness_2008}.
Known impossibility results on invisible reads of TM assume stronger consistency than causal consistency.
Attiya et al. \cite{attiya_inherent_2011} showed that no TM implementation ensures strict serializability, \emph{disjoint-access parallelism} \cite{attiya_inherent_2011}\footnote{Disjoint-access parallelism \cite{attiya_inherent_2011} requires  two transactions  accessing different application objects to also access different base objects.} and uses invisible reads, the proof of which shows that if writes are frequent, then a read can miss some write forever.
Peluso et al. \cite{peluso_disjoint_2015} considered any consistency that respects the real-time order of transactions (which causal consistency does not necessarily respect), and proved a similar impossibility result. 
Perelman et al. \cite{perelman_maintaining_2010} proved an impossibility result for a multi-version TM implementation with invisible read-only transactions that ensures strict serializability and maintains only a necessary number of versions, the proof of which focuses on garbage collection of versions. None of the results or proofs above imply our impossibility results.

\section{Concluding Remarks}

Our impossibility results establish fundamental limitations on the performance on transactional storage systems. The first impossibility basically says that fast read-only transactions are impossible in a general setting where writes can also be performed within transactions.  The second impossibility says that in a setting where all transactions are read-only, they can be fast, but they need to visible. A system like COPS-SNOW \cite{lu_snow_2016} implements such visible read-only transactions that leave traces when they execute, and these traces are propagated on the servers during writes. (For completeness, we sketch in Appendix D a variant algorithm where these traces are propagated asynchronously, i.e., outside writes). 

Clearly, our impossibilities apply to causal consistency and hence  to any stronger consistency criteria. They hold without assuming any message or node failures and hence hold for failure-prone systems. For presentation simplicity, we assumed that servers store disjoint sets of objects, but our impossibility results hold without this assumption (Appendix C). 
Some design choices could circumvent these impossibilities like imposing a full copy of all objects on each server (as in SwiftCloud \cite{zawirski_write_2015}), periodic communication between servers and clients (as in Eiger-PS \cite{lu_snow_2016}), 
or transactions that abort and retry (as in ChainReaction \cite{almeida_chainreaction_2013}),
\footnote{Multiple versions (allowed to be returned in a transaction) do not circumvent our impossibility results as an infinite number of versions would be necessary.}
Each of these choices clearly hampers scalability. 

We considered an asynchronous system where messages can be delayed arbitrarily   and there is no global clock. One might also ask what happens with synchrony assumptions.
If we assume a fully synchronous system where message delays are bounded  and all processes can access a global accurate clock, then our impossibility results can be both circumvented. We give such a timestamp-based algorithm in Appendix D.
If we consider however a system where communication delays are  unbounded and all processes can access a global clock, then only our Theorem \ref{thm:imp} holds. In this sense, message delay is key to the impossibility of fast read-only transactions, but not to the requirement that they need to be visible, in the restricted model where all transactions are read-only. In this restricted model, our timestamp-based algorithm of Appendix D can also circumvent Theorem \ref{thm:main} if we assume a global clock.

\newpage

\bibliographystyle{IEEEtran}
\bibliography{IEEEabrv,bib}

\newpage

\begin{appendix}

\section{Proof of Theorem \ref{thm:imp}}

\subsection{Definition of One-Version Property}

In Section 2, we required that a server returns one and only one value which has been written to an object, a property we we call one-version, which we define below.   The formal definition is necessary because (1) there are a lot of possibilities for message $m$ to return value $v$, e.g., $m = v$, or $m = v \mbox{ XOR } c$, or $m = v + c$ for some constant $c$; and (2) if messages $m_1$ and $m_2$ are from two different servers $P_X$ and $P_Y$ and $m_1 = (x, \mbox{first 8 bits of }z \mbox{ XOR } c)$,  $m_2 = (y, \mbox{other bits of } z \mbox{ XOR } c)$, where $z$ is a value written to another object $Z$, then $(m_1, m_2)$ can return more values $x, y, z$ than expected. The first issue calls for defining messages in a general manner; the second situation should be excluded (as it is not implemented by any practical storage system to the best of our knowledge). 
The formal definition addresses both issues.

To define the value included in a message in general, we have to measure the information \emph{revealed} by events and messages.
We consider the maximum amount of information that any algorithm can output according to the given input: events and messages. We then restrict the class of algorithms any correct implementation may provide (to the client-side). For example, an algorithm that outputs 1 regardless of the input should be excluded. Hence two definitions, one on algorithms used to reveal information and one on information indeed revealed, are presented before the definition of one-version property.

\begin{definition}[Successful algorithms]
Consider any algorithm, denoted by $\mathcal{A}$, whose input is some information $i_E$ (events and messages) of execution $E$. The output of $\mathcal{A}$ is denoted by $\mathcal{A}(i_E)$. We say that $\mathcal{A}$ is \emph{successful}
\begin{itemize} 
\item If $v\in \mathcal{A}(i_{E^v})$, then in $E^v$, $w(a)v$ occurs; and 
\item For any value $u$, let $E^u$ be the resulting execution where $w(a)v$ is replaced by $w(a) u$. Then $u\in\mathcal{A}(i_{E^u})$.
\end{itemize} 
\end{definition}

\begin{definition}[Information revealed]
Consider execution $E$, client $C$ and $C$'s invocation $I$ of some transaction. Denote by $M$ any non-empty subset of message receiving events that occur at $C$ (including message contents) during $I$. 
We say that $M$ \emph{reveals} $(n_2 - n_1)$ version(s) of an object $a$ if
\begin{itemize}
\item Among all successful algorithms whose input is $v_{C, I}$, $n_1$ is the maximum number of values in the output that are also values written to $a$ before the start of $I$;
\item Among all successful algorithms whose input is $v_{C, I}$ and $M$, $n_2$ is the maximum number of values in the output that are also values written to $a$ before the end of $I$;
\end{itemize} 
where $v_{C, I}$ is $C$'s \emph{view}, or all events that have occurred at $C$ (including the message content if an event is message receiving), before the start of $I$.
\end{definition}

\begin{definition}[One-version property]
\label{def:one-version}
Consider any execution $E$, any client $C$ and $C$'s invocation $I$ of an arbitrary transaction $T$ with non-empty read set $R$. ($T$ is general here in that $T$ may contain only a single read, i.e., the write set is empty and $|R| = 1$). For any non-empty set of servers $A$, let $\Lambda_{I, A} = R\cap\{ \mbox{objects stored on } P |\forall P\in A\}$ and denote by $M_{I, A}$ the events of $C$ receiving messages from any server in $A$ (including message contents) during $I$. Then an implementation satisfies one-version property if
\begin{itemize}
\item $\forall E, \forall I, \forall A$, $M_{I, A}$ reveals at most one version for each object in  $\Lambda_{I, A}$, and no version of any object not in $\Lambda_{I, A}$; and
\item $\forall E, \forall I$, when $A$ includes all servers, then $M_{I, A}$ reveals exactly one version for each object in  $R$, and no version of any object not in $R$.
\end{itemize}
(If $M_{I, A}$ reveals exactly one version of an object $a$, we may also specify the version $v$ and say that $M_{I, A}$ reveals $v$.)
\end{definition}

A final remark is on the relation with the property of fast transactions. Naturally, when we consider the maximum amount of information revealed by a transaction, we have to consider all message receiving events at the client-side. As one-version property is defined in general here (independent from the property of fast transactions), there can be multiple message receiving events during a transaction. The formal definitions above consider the set of all these events rather than individual ones separately (i.e., what one message can reveal). 
This general definition is necessary to disallow implementations equivalent to fast transactions to bypass our results. Consider an equivalent implementation in a transaction of which a server splits its message to several ones and sends them to a client where each message reveals one version. Such implementation does not conform to our requirement on servers in Section 2 yet is however not excluded by a definition considering only individual messages, showing the necessity of a general definition as we present above.

\subsection{Construction of $E_{imp}$}

The construction of $E_{imp}$ is based on the following notations and execution $E_{prefix}$.
We denote by $P_X$ the server which stores object $X$, and $P_Y$ the server which stores object $Y$. Let $E_{prefix}$ be any execution where $X$ and $Y$ have been written at least once and some values of $X$ and $Y$ have been visible. Denote by $t_{start}$ when some values of $X$ and $Y$ have been visible in $E_{prefix}$. 
Then we construct execution $E_{imp}$ starting from $t_{start}$. In $E_{imp}$, client $C_w$ does transaction $WOT = (w(X)x, w(Y)y)$ which starts at some time $t_w > t_{start}$, while all other clients do no transaction.
For $E_{imp}$, since $t_w$, $WOT$ is the only transaction. However, for \emph{any} positive number $k$, we show that $k$ messages have to be sent and received after $t_w$ and before $x$ and $y$ are visible. Since $k$ can be any positive number, then $k$ essentially goes to infinity. 

More specifically, we show that no matter how many $k$ messages have been sent and received, an additional message is necessary for $x$ and $y$ to be visible. I.e., our construction is by mathematical induction on the number $k$ of messages, summarized in Proposition \ref{prop:k}, Proposition \ref{prop:k-2} and Proposition \ref{prop:base}. 
Each case $k$ (Proposition \ref{prop:k-2}) is a property of $E_{imp}$ after $k$ messages have been sent and received in the visibility of values written by $WOT$: if one read-only transaction $ROT$ is added, then the values written by $WOT$ cannot yet be returned to $ROT$. 
Here are some notations for $ROT$ and the message schedule during $ROT$ which we use in the statement of case $k$. Let $C_r$ be the client which requests  $ROT = (r(X)*, r(Y)*)$; $C_r$ has requested no transaction before. By Definition \ref{def:fast-op}, for any $ROT$, we schedule messages such that every message which $C_r$ sends to either $P\in\{P_X,P_Y\}$ during $ROT$ arrives at the same time $t_P$ at $P$. After $t_P$ and before $P$ has sent one message to $C_r$ (during $ROT$), $P$ receives no message and any message sent by $P$ to a process other than $C_r$ is delayed to arrive after $ROT$ ends. For either $P$, we denote these messages which $P$ sends to $C_r$ after $t_P$ (during $ROT$) by $m_{resp,P}$. The message schedule of $ROT$ such that $P$ receives no message during $ROT$ is also assumed in Lemma \ref{lma:helper}.
Each case $k > 1$ is accompanied by a preliminary (Proposition \ref{prop:k}) on the necessity of an additional $k$th message, while the base case is a special case for which two additional messages are necessary (Proposition \ref{prop:base}). 
To deal with the special case, we index these messages starting from $0$: $m_0, m_1, \ldots, m_{k-1}, m_k$ (but our base case is still the case where $k = 1$). 
As shown in Proposition \ref{prop:base}, $P_X$ and $P_Y$ send $m_X$ and $m_Y$ after $t_w$ that precede some message which arrive at $P_Y$ and $P_X$ respectively. We define $m_0$ and $m_1$ as follows so that the base case is the case where $k = 1$ defined in Proposition \ref{prop:k}: one server between $P_X$ and $P_Y$ sends $m_{0}, m_{0}\in\{m_X, m_Y\}$ before receiving any message which is preceded by $m_1$ for $\{m_0, m_1\} = \{m_X, m_Y\}$. 
We refer to Definition \ref{def:order} for the formal definition on the relation of one message \emph{preceding} another used in our Proposition \ref{prop:k}, Proposition \ref{prop:k-2} and Proposition \ref{prop:base}.

\begin{proposition}[Additional message in case $k$]
\label{prop:k}
In $E_{imp}$, $m_0, m_1, \ldots, m_{k-1}$ have been sent.
Let $D_{k-1}$ be the source of $m_{k-1}$. Let $\{D_{k-1}, D_k\} = \{P_X, P_Y\}$. Let $T_{k-1}$ be the time when the first message preceded by $m_{k-1}$ arrives at $D_{k}$.
After $T_{k-1}$, $D_k$ must send at least one message $m_{k}$ that precedes some message which arrives at $D_{k-1}$.
\end{proposition}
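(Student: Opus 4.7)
The plan is to prove Proposition \ref{prop:k} by contradiction, generalizing the inductive step sketched for $k=2$ in Section 3.4. I would suppose that after $T_{k-1}$, $D_k$ sends no message that precedes any message arriving at $D_{k-1}$, and build an execution $E'$ which augments $E_{imp}$ with a single fast transaction $ROT=(r(X)*,r(Y)*)$ issued by a fresh client $C_r$, such that the return value of $ROT$ mixes an old and a fresh value across $X$ and $Y$ and thereby contradicts Lemma \ref{lma:atomicity}.

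The main ingredient is a \emph{window lemma} which I would establish by induction on $j=1,\ldots,k$, jointly with Proposition \ref{prop:k}. For $j\geq 2$, let $W_j$ denote the interval at $D_j$ strictly between the moment $D_j$ sent $m_{j-2}$ and $T_{j-1}$; for $j=1$, let $W_1$ denote any time strictly before $t_w$ at $D_1$. The lemma claims that in any execution in which $C_r$'s request message reaches $D_j$ at some time in $W_j$, $D_j$ must respond with the pre-$WOT$ value of its object. The $j=1$ case is immediate because $D_1$ has no information about $WOT$ yet. For the inductive step, I would consider a hypothetical execution whose prefix coincides with $E_{imp}$ and in which $C_r$'s messages reach $D_{j-1}$ at some time $t'\in W_{j-1}$ and reach $D_j$ at some time $t\in W_j$ (easily scheduled since $W_{j-1}$ precedes $W_j$ in time). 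By the inductive hypothesis $D_{j-1}$ returns an old value, and then Lemma \ref{lma:atomicity} forces $D_j$ to also return an old value in that hypothetical. Because the second bullet of Definition \ref{def:fast-op} prevents $D_j$ from receiving any further message between $t$ and its response, the response is fully determined by $D_j$'s view at $t$; that view coincides with $E_{imp}$'s prefix up to $t$ plus the single $ROT$ request, and is independent of the scheduling used at $D_{j-1}$ (during $ROT$, $D_{j-1}$ exchanges messages only with $C_r$ by fast read-only transactions). Hence $D_j$ returns the same old value in every execution that places $C_r$'s message to $D_j$ inside $W_j$.

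Applying the window lemma at $j=k$, I would complete $E'$ by picking $t^*\in W_k$ and letting $C_r$'s message to $D_k$ arrive at $t^*$, so that $D_k$ returns an old value; and by delaying $C_r$'s message to $D_{k-1}$ so that it arrives past the finite instant $\tau$ after which the fresh value of the object stored at $D_{k-1}$ is visible in $E_{imp}$ (which exists by Definition \ref{def:visible}), so that $D_{k-1}$ responds with the fresh value. The contradiction hypothesis is precisely what lets these two arrivals coexist: since $D_k$ emits no message after $T_{k-1}$ that precedes anything arriving at $D_{k-1}$, any residual $D_k$-to-$D_{k-1}$ traffic may be postponed past $\tau$ without altering the portion of $E_{imp}$ relevant to the visibility of the fresh value at $D_{k-1}$. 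The resulting return value of $ROT$ is then $(x^*,y)$ or $(x,y^*)$, contradicting Lemma \ref{lma:atomicity}, and therefore refuting the assumption that no $m_k$ need be sent.

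The hardest part is making this joint induction fully rigorous. Each inductive step requires a hypothetical execution in which a single $ROT$ lands inside two successively earlier windows at alternating servers $D_j$ and $D_{j-1}$; the schedule must simultaneously satisfy the arrival-time constraints at both servers, Definition \ref{def:fast-op}, causal consistency, and the timing inherited from $E_{imp}$. I would need to verify that each $W_j$ is nonempty (using that $D_j$ sent $m_{j-2}$ strictly before $T_{j-2}$, which is strictly before $D_{j-1}$ sent $m_{j-1}$, which is strictly before $T_{j-1}$) and that all message delays introduced are permitted under the asynchronous model. A secondary subtlety is handling auxiliary messages that $D_k$ may send between sending $m_{k-2}$ and $T_{k-1}$ that are \emph{not} themselves any $m_i$: these must not carry information to $D_{k-1}$ that would let it return the fresh value before $\tau$, a concern that propagates downward through the induction and is again tamed by iterated use of the contradiction hypothesis at earlier indices.
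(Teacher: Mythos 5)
Your overall architecture is the same as the paper's: a joint induction in which a ``window'' invariant (your window lemma, the paper's Proposition~\ref{prop:k-2} fed into Lemma~\ref{lma:helper}) is used to force, by contradiction, a mixed old/fresh return of a single $ROT$ that violates Lemma~\ref{lma:atomicity}. However, two steps in your argument have genuine gaps, and they are exactly the points the paper's Lemma~\ref{lma:helper} is engineered to handle. First, your final step concludes that $D_{k-1}$ ``responds with the fresh value'' merely because $C_r$'s request reaches $D_{k-1}$ after the visibility instant $\tau$. Definition~\ref{def:visible} constrains only reads whose transaction \emph{starts} after $\tau$; in your $E'$ the $ROT$ starts before $t^*\in W_k$, hence before $\tau$, so progress does not apply to it directly, and a correct implementation could legitimately return an old value to such a transaction. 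The paper closes this by building a second execution in which $ROT$ genuinely starts after visibility (its $E_2$), invoking progress there, and transferring the response back via the indistinguishability of $D_{k-1}$ between the two executions together with the one-version/successful-algorithm machinery (a server ``returning'' a value is itself only meaningful through that machinery, since return values are computed client-side from the response messages). Your proposal has no analogue of this comparison execution.

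Second, your indistinguishability claims -- that $D_j$'s view at $t$ is ``independent of the scheduling used at $D_{j-1}$'' and that residual traffic ``may be postponed past $\tau$ without altering'' $D_{k-1}$'s side -- are justified only by Definition~\ref{def:fast-op}, which delays the reading server's messages \emph{to servers}. It does not restrict messages that server sends to clients (e.g., its responses to $C_w$, whose $WOT$ is concurrent and unconstrained), and a client can relay the divergence to the other server; this is why the paper works with the general \emph{precedes} relation of Definition~\ref{def:order}, which tracks chains through any process. Moreover, your windows start at ``the moment $D_j$ sent $m_{j-2}$,'' so during the window (all before $T_{k-1}$, hence untouched by your contradiction hypothesis) $D_j$ may in $E_{imp}$ send messages that do precede arrivals at the other server; in your hypothetical these are perturbed or held back (the fast-transaction schedule delays them until $ROT$ ends, which in $E'$ is after $\tau$), so the other server's view -- and in particular whether the fresh value is visible there by $\tau$ -- can differ from $E_{imp}$, breaking both the inductive transfer and the final step. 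The paper avoids this by defining the cutoff $t_s$ as the \emph{latest} time before the window's end at which the server sends any message preceding something that arrives at the other server, and by delaying \emph{all} messages it sends after $t_s$, which is harmless precisely by that maximality; your closing remark that auxiliary messages are ``tamed by iterated use of the contradiction hypothesis at earlier indices'' does not supply this, because the hypothesis and the earlier propositions say nothing about cross-traffic sent before $T_{k-1}$ other than the $m_i$ themselves.
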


\begin{proposition}[Case $k$]
\label{prop:k-2}
In $E_{imp}$, $m_0, m_1, \ldots, m_{k-1}, m_k$ have been sent. Then for any $t$ before $T_k$, if $C_r$ starts $ROT$ before $t$ and $t_{D_{k-1}} = t$, then $ROT$ may not return $x$ or $y$. 
\end{proposition}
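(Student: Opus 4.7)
The plan is to prove Proposition \ref{prop:k-2} by strong induction on $k$, interleaved with the inductive proof of Proposition \ref{prop:k}, and anchored at the base case $k=1$, which follows by the same type of indistinguishability argument sketched in Section 3.3: consider an alternative execution in which $ROT$'s request reaches $D_1$ before $t_w$ (so $D_1$ must return an old value) and combine this with indistinguishability at $D_0$ (whose view is preserved because $t<T_1$ guarantees no $m_1$-preceded message has yet reached $D_0$) together with Lemma \ref{lma:atomicity}. I focus here on the inductive step: assuming case $k-1$ of Proposition \ref{prop:k-2} together with case $k$ of Proposition \ref{prop:k}, I deduce case $k$ of Proposition \ref{prop:k-2}.

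Given an execution $\Sigma$ of $E_{imp}$ satisfying the hypotheses of case $k$ (messages $m_0,\ldots,m_k$ all sent, $C_r$ starts $ROT$ before $t$, and $t_{D_{k-1}}=t<T_k$), I argue by contradiction: suppose $ROT$ returns $x$ in $\Sigma$, so by Lemma \ref{lma:atomicity} it also returns $y$. I then construct an alternative execution $E'$ that agrees with $\Sigma$ up to some time $t^*<T_{k-1}$ but in which $C_r$'s request to $D_k$ is delivered at $t^*$. The choice of $t^*$ must satisfy (i) $t^*<T_{k-1}$, (ii) $ROT$'s start precedes $t^*$, and (iii) $m_0,\ldots,m_{k-1}$ have all been sent by $t^*$ in $\Sigma$; each can be met simultaneously, because $m_{k-1}$ is sent no later than $T_{k-1}$ and $C_r$ can start $ROT$ arbitrarily early. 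Definition \ref{def:fast-op} then freezes $D_k$'s external communication from $t^*$ until the end of $ROT$. Because in $E'$ the request arrives at $D_k=D_{k-2}$ strictly before $T_{k-1}$, case $k-1$ of Proposition \ref{prop:k-2} forces $D_k$ to respond with some old value $x^*\ne x$ (WLOG $D_k=P_X$).

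The core of the proof is to verify that $D_{k-1}$'s view in $E'$, up to and including its response to $ROT$, coincides with its view in $\Sigma$. The main ingredients: (a) events before $t^*$ are identical in the two executions by construction; (b) after $t^*$ the fast-transaction property silences $D_k$'s outgoing traffic in $E'$; (c) the hypothesis $t<T_k$ ensures that no message preceded by $m_k$ reaches $D_{k-1}$ before $t$ in $\Sigma$, which is consistent with $m_k$ never being emitted in $E'$; and (d) any other $D_k$-originated message that in $\Sigma$ would leave $D_k$ in the window $(t^*, t)$ and arrive at $D_{k-1}$ before $t$ is re-delayed in $E'$, using the asynchronous model's freedom to pick arbitrary message delays, so as to arrive only after $D_{k-1}$'s response. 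Indistinguishability then forces $D_{k-1}$ to return $y$ in $E'$ just as in $\Sigma$, so $ROT$ returns $(x^*, y)$ in $E'$, contradicting Lemma \ref{lma:atomicity} and completing the inductive step.

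The delicate point (d) is what I expect to be the main obstacle: one must exhibit a globally consistent asynchronous schedule for $E'$ in which all events at $D_{k-1}$ up to its response match $\Sigma$ while the post-$t^*$ traffic of $D_k$ is absorbed. This requires arguing via Definition \ref{def:order} that no preceded-by chain propagates the suppressed information back to $D_{k-1}$ through an intermediate server (immediate in the two-server setting but needing careful chain-tracking for the multi-server extension of Appendix C), and verifying that the rescheduled delays respect both the communication-channel semantics and the fast-transaction freeze of $D_k$ itself.
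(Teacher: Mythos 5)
Your overall strategy is the same as the paper's: the paper proves Proposition \ref{prop:k-2} inductively via a helper lemma (Lemma \ref{lma:helper}) whose second statement is exactly your inductive step, and its proof likewise builds an auxiliary execution in which the \emph{other} server's request is delivered inside the window covered by the previous case, keeps the request at the server named in the current case fixed, and derives a contradiction from that server's indistinguishability together with Lemma \ref{lma:atomicity} / the one-version machinery. So the shape of your argument is right. The problem is your point (d), which is where the real work lies and where your fix goes in the wrong direction. If in $\Sigma$ a message originating (via the precedence relation of Definition \ref{def:order}) from a send of $D_k$ after $t^*$ arrives at $D_{k-1}$ before $t$, then that arrival is part of $D_{k-1}$'s view in $\Sigma$; re-delaying it in $E'$ does not ``absorb'' it, it creates precisely the divergence between the two views that your indistinguishability claim forbids. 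Nor can you deliver it identically in $E'$, since after $t^*$ the state of $D_k$ has diverged and Definition \ref{def:fast-op} delays its server-bound messages. Your point (c) only controls messages preceded by $m_k$; nothing in the $m_i$/$T_i$ bookkeeping prevents $D_k$ from sending other traffic in $(t^*,T_{k-1})$ or $(T_{k-1},t)$ that reaches $D_{k-1}$ before $t$, and an arbitrary implementation may do exactly that.

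The paper closes this gap inside the \emph{original} execution rather than in the auxiliary one: it defines $t_s$ as the latest time before the window boundary at which the moved server sends any message that precedes a message ever arriving at the anchor server, places the moved request's delivery strictly after $t_s$, and only then delays all of the moved server's post-$t_s$ messages in the auxiliary execution. Maximality of $t_s$ (for pre-boundary sends) together with the definition of $T_k$ as the first arrival of an $m_k$-preceded message (for post-boundary sends) guarantees that in $E_{imp}$ itself the anchor server receives nothing traceable to those sends before it responds, so the delaying in the auxiliary execution preserves its view instead of breaking it. Your construction has no analogue of $t_s$: choosing $t^*$ merely so that $m_0,\ldots,m_{k-1}$ have been sent and $t^*<T_{k-1}$ gives no such quiet window. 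A secondary slip: conditions (i)--(iii) conflate $\Sigma$ and $E'$. In $\Sigma$, $ROT$'s start time is given by the statement and may well lie after $T_{k-1}$, so you cannot require it to precede $t^*$ there; like the paper, you may start $ROT$ earlier only in the auxiliary execution, and then ``$E'$ agrees with $\Sigma$ up to $t^*$'' must be weakened to agreement except for $C_r$'s in-flight request messages (harmless for the anchor server, which only sees the arrival at $t$, but it needs to be said). Finally, the jump from ``case $k-1$ implies $ROT$ may not return $x$ or $y$'' to ``$D_k$'s response reveals an old value'' should be routed through the one-version/successful-algorithm definitions, as the paper does, since the proposition constrains the transaction's return value, not the content of an individual server's message.
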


\begin{proposition}[Additional message in the base case]
\label{prop:base}
After $t_w$, any $P\in\{P_X, P_Y\}$ must send at least one message that precedes some message which arrives at $Q$ for $\{P,Q\} = \{P_X, P_Y\}$.
\end{proposition}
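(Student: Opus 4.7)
The plan is to argue by contradiction using symmetry, the fast-transaction property (Definition \ref{def:fast-op}), progress (Definition \ref{def:visible}), and Lemma \ref{lma:atomicity}. Assume, without loss of generality, that $P_X$ sends no message after $t_w$ which precedes any message arriving at $P_Y$; the symmetric case where the roles of $P_X$ and $P_Y$ are swapped is handled identically. My goal is to exhibit an extension of $E_{imp}$ by a single read-only transaction $ROT = (r(X)*, r(Y)*)$ whose return value is forbidden by Lemma \ref{lma:atomicity}, giving the required contradiction.

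I would take a fresh client $C_r$ and let $C_r$ request $ROT$ so that, using the freedom of the asynchronous network, the message from $C_r$ to $P_X$ arrives at some time $t_{P_X}$ strictly before $t_w$, whereas the message from $C_r$ to $P_Y$ is delayed and arrives at some time $t_{P_Y}$ strictly after $\tau_{Y,y}$, the visibility deadline of $y$. All other messages are delivered exactly as in $E_{imp}$. The contradiction hypothesis guarantees that no post-$t_w$ action of $P_X$ has any causal effect on $P_Y$ before $\tau_{Y,y}$, so $P_Y$'s trajectory up to $\tau_{Y,y}$, and in particular the visibility of $y$ at $P_Y$ by that time, is preserved in the constructed run.

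I would then read off the return of $ROT$. By Definition \ref{def:fast-op}, $P_X$ receives no server message between $t_{P_X}$ and its reply, so its response is a function of its local state at $t_{P_X}$, which antedates both $w(X)x$ and any other $WOT$-induced message. The uniqueness of values written together with the one-version property (Definition \ref{def:one-version}) forces $P_X$ to return some $x^* \neq x$ drawn from the values written in $E_{prefix}$. On the $P_Y$ side, Definition \ref{def:fast-op} likewise isolates $P_Y$ during its response interval starting at $t_{P_Y} > \tau_{Y,y}$, and by Definition \ref{def:visible} the response reveals $y$. Thus $ROT$ ends up returning $(x^*,y)$ with $x^* \neq x$, in direct violation of Lemma \ref{lma:atomicity}.

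The main obstacle I expect is to justify rigorously that the constructed execution is a legitimate run, i.e., that $P_Y$'s progress obligation to make $y$ visible is actually discharged under the contradiction hypothesis. This boils down to observing that any chain of messages needed for $y$'s visibility at $P_Y$ is disjoint from $P_X$'s post-$t_w$ behavior (since by assumption none of it can reach $P_Y$), so $P_Y$ cannot distinguish the constructed execution from $E_{imp}$ by $\tau_{Y,y}$; formalizing this indistinguishability, and verifying that the adversarial delays remain admissible in the asynchronous model, is where the care is needed. With that handled, the contradiction yields the proposition.
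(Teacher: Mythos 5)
Your overall strategy is the right one and matches the paper's in outline (argue by contradiction, have $ROT$'s request reach $P_X$ before $t_w$ and reach $P_Y$ only much later, and invoke Lemma \ref{lma:atomicity}), but the step that pins down $P_Y$'s answer has a genuine gap. You claim that ``by Definition \ref{def:visible} the response reveals $y$'' because the request reaches $P_Y$ after $\tau_{Y,y}$. Definition \ref{def:visible}, however, constrains only reads that \emph{start} (i.e., are requested by the client) at time $t\geq \tau_{Y,y}$; in your constructed run $ROT$ starts before $t_w<\tau_{Y,y}$, so progress imposes nothing on its return value for $Y$, no matter how late the request arrives at $P_Y$. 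The paper closes exactly this hole (inside Lemma \ref{lma:helper}) by introducing a second execution: $E_{imp}$ extended \emph{without} the early $ROT$, in which $x$ and $y$ become visible by some $t_{ev}$ and $C_r$ requests $ROT$ only after $t_{ev}$, so that progress genuinely forces the return $(x,y)$ there; it then shows $P_Y$ cannot distinguish that execution from the contradictory one up to the moment it sends $m_{resp,P_Y}$, and transfers ``$m_{resp,P_Y}$ reveals $y$'' to the contradictory run via the one-version property and the successful-algorithm formalism. Your proposed indistinguishability is with $E_{imp}$ itself (which contains no $ROT$ at all), and that comparison cannot yield any statement about what $P_Y$ returns.

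A second, related weakness: you keep ``all other messages delivered exactly as in $E_{imp}$'' and argue that the contradiction hypothesis shields $P_Y$ from $P_X$'s post-$t_w$ behavior. But the hypothesis is a statement about $E_{imp}$ (the run without $ROT$); once the early $ROT$ request reaches $P_X$ at $t_{P_X}<t_w$, $P_X$'s subsequent behavior may change and it may now emit messages toward $P_Y$ that it never sent in $E_{imp}$, so ``delivered exactly as in $E_{imp}$'' is not even well defined and the shielding does not follow. The paper handles this by explicitly delaying, in the contradictory execution, every message $P_X$ sends after the last time ($t_s$) at which it sent anything reaching $P_Y$ in the reference run, and only then deriving $P_Y$'s indistinguishability. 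With those two repairs---the auxiliary execution in which $ROT$ starts after visibility, plus the explicit delaying of $P_X$'s outgoing messages---your argument becomes the paper's proof of Proposition \ref{prop:base} via Lemma \ref{lma:helper}.
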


\subsection{Proof of Theorem \ref{thm:imp}}

Before the proof of the base case and the inductive step from case $k$ to case $k+1$, we prove a helper lemma, Lemma \ref{lma:helper}. Lemma \ref{lma:helper} is helpful for the proof of both the base case and case $k$, and thus proved additionally to avoid repetition.
We also show a property of write-only transactions in Lemma \ref{lma:atomicity}. We refer to the main paper for its formal statement. As Lemma \ref{lma:helper} is based on Lemma \ref{lma:atomicity}, we prove the latter first. 

\begin{proof}[Proof of Lemma \ref{lma:atomicity}]
By contradiction. Suppose that for some execution $E_{imp}$ and some read-only transaction $ROT$, $ROT$ returns $(x^*, y)$ for some $x^*\neq x$, or $(x, y^*)$ for some $y^*\neq y$. By symmetry, we need only to prove the former. As $ROT$ returns $(x^*, y)$, by causal consistency, for $C_r$, there is serialization $\mathcal{S}$ that orders all $C_r$'s transactions and all transactions including a write such that the last preceding writes of $X$ and $Y$ before $ROT$ in $\mathcal{S}$ are $w(X)x^*$ and $w(Y)y$ respectively. Therefore $\mathcal{S}$ must order $WOT$ before $w(X)x^*$. However, if we extend $E_{imp}$ with $C_r$ requesting another read-only transaction $ROT_2$, then by progress, some $ROT_2$ must return $(x,y)$. As $ROT_2$ occurs after $ROT$, $\mathcal{S}$ must order $ROT_2$ after $ROT$ and then the last preceding writes of $X$ and $Y$ before $ROT_2$ in $\mathcal{S}$ cannot be $w(X)x$ and $w(Y)y$ respectively, contradictory to the property of causal consistency.
\end{proof}

\begin{lemma}[Communication prevents latest values]
\label{lma:helper}
Suppose that $E_{imp}$ has been extended to some time $A$ and there is no other write than contained in $WOT$ since $t_{start}$. Let $\{P, Q\} = \{P_X, P_Y\}$ where $P$ can be either $P_X$ or $P_Y$. Given $P$, assume that for some time $B > A$ and any $t_P\in [A,B)$, if $C_r$ starts $ROT$ before $t_P$, then $ROT$ may not return $x$ or $y$.
We have:
\begin{enumerate}
\item 
After $B$, $P$ must send at least one message which precedes some message that arrives at $Q$;
\item Let $t$ be the time when $Q$ receives the first message which is preceded by some message which $P$ sends after $B$. For any $\tau\in[A, t)$, if $C_r$ starts $ROT$ before $\tau$ and $t_Q = \tau$,\footnote{If needed, by the asynchronous communication, we may delay $t$ after $ROT$ ends to respect the message schedule of $ROT$ that $Q$ receives no message during $ROT$.} then $ROT$ may not return $x$ or $y$.
\end{enumerate}
\end{lemma}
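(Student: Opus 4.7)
The plan is to prove both claims by contradiction. In each case, the contradiction is derived by exhibiting an extension of $E_{imp}$ in which some $ROT$ returns a mixed response of the form $(x^*,y)$ or $(x,y^*)$, which is impossible by Lemma~\ref{lma:atomicity}. The common engine is an indistinguishability argument exploiting asynchrony: once the hypothesis has pinned $P$ down to return a stale value whenever $t_P \in [A, B)$, the goal is to schedule $Q$'s reception so that $Q$ is free to return the fresh value.

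For Part~1, I would assume for contradiction that after $B$ no message sent by $P$ precedes any message arriving at $Q$. Pick some $t_P \in [A,B)$. Using asynchrony, schedule $C_r$'s $ROT$ so that the request to $P$ is received exactly at $t_P$, forcing a stale response by hypothesis, and so that the request to $Q$ is delayed to arrive well after the fresh value of the object stored at $Q$ has become visible, as guaranteed by Definition~\ref{def:visible}. Because the contradiction hypothesis precludes any post-$B$ influence of $P$ on $Q$ (directly or through intermediaries, per Definition~\ref{def:order}), $Q$'s local state at the moment it services the $ROT$ request is indistinguishable from one in which the fresh value has been propagated throughout the system, so $Q$ returns it. The resulting response has the forbidden mixed form.

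For Part~2, I would assume for contradiction that some execution $E_Q$ has $t_Q = \tau \in [A, t)$ and $ROT$ returns $x$ or $y$ (and hence both, by Lemma~\ref{lma:atomicity}). I would then build a variant $E'$ that (i) preserves the entire sequence of messages arriving at $Q$ by time $\tau$, and (ii) reschedules $C_r$'s request to $P$ so that it reaches $P$ at some $t_P \in [A, B)$. The definition of $t$ ensures that no message preceded by any post-$B$ action of $P$ reaches $Q$ before time $t > \tau$, and Definition~\ref{def:fast-op} guarantees that $P$'s outgoing messages during its handling of $ROT$ are delayed past its own response; together these permit me to alter $P$'s reception without perturbing $Q$'s pre-$\tau$ view. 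Hence $Q$ still returns its fresh value in $E'$, while $P$ now returns a stale one, yielding the contradictory mixed response.

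The main obstacle is the careful justification of the indistinguishability steps, especially in Part~2: one must argue that altering only the reception time of $C_r$'s $ROT$-request at $P$ leaves unchanged every message that arrives at $Q$ by $\tau$. This relies on the asynchronous message model, on the isolation clause of Definition~\ref{def:fast-op}, and on the precedence notion of Definition~\ref{def:order}. A secondary subtlety in Part~1 is ensuring that eventual visibility at $Q$ is not itself predicated on messages from $P$; this is precisely what the contradiction hypothesis rules out, but the argument must explicitly appeal to Definition~\ref{def:visible} in the presence of the ongoing (but $Q$-isolated) activity of $P$.
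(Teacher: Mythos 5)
Your overall skeleton (argue by contradiction, isolate $Q$ from $P$'s influence, and force a forbidden return value) is the same as the paper's, but two load-bearing steps are missing. The first is in Part~1: the inference ``$Q$'s state is as if the fresh value has propagated, so $Q$ returns it'' does not follow. Progress (Definition~\ref{def:visible}) is keyed to the time at which the \emph{read starts}, not to the time the request reaches the server; in your execution $ROT$ starts before $t_P<B$, possibly long before the visibility time, so Definition~\ref{def:visible} says nothing about what this particular $ROT$ returns, however late its request reaches $Q$. The paper closes this by constructing a second, reference execution $E_2$ in which no transaction runs until $x,y$ become visible at some $t_{ev}$ and $ROT$ is started only after $t_{ev}$, so progress applies there and forces $(x,y)$; it then transfers $Q$'s response $m_{resp,Q}$ to the constructed execution via indistinguishability and uses the one-version property plus the fact that $C_r$ has no prior history (so its output is a successful algorithm of $m_{resp,P}$ and $m_{resp,Q}$ alone) to pin down that the transferred response already fixes the fresh value there. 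Your proposal builds no such reference execution and never invokes the one-version machinery, so the crucial ``so $Q$ returns it'' (and, in Part~2, ``$m_{resp,Q}$ still reveals $x$ or $y$'') has no support.

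The second gap is that the indistinguishability itself is not secured by the tools you cite. The contradiction hypothesis of Part~1 and the definition of $t$ in Part~2 constrain only messages $P$ sends \emph{after} $B$, and the isolation clause of Definition~\ref{def:fast-op} constrains only the modified execution (messages $P$ sends after receiving $C_r$'s request). Neither controls $P$'s ordinary messages sent in $[t_P,B)$ in the unmodified execution: such messages may precede messages arriving at $Q$ before it responds, and they may be absent or different once you move $C_r$'s request to $P$ up to $t_P$, so your requirement (i) in Part~2 (preserving everything $Q$ receives by $\tau$) can fail, and $Q$ may likewise distinguish the paired executions in Part~1. The paper's fix is the $t_s$ device: let $t_s$ be the latest time before $B$ at which $P$ sends a message preceding some message that arrives at $Q$, place $t_P$ (and the start of $ROT$) after $t_s$, and delay every message $P$ sends after $t_s$; only then do $Q$'s views coincide up to its response. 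Adding the reference execution with the one-version argument, and the $t_s$ maneuver, is what your outline still needs to go through.
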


\begin{proof}[Proof of Lemma \ref{lma:helper}]
We prove the first statement by contradiction. Suppose that after $B$, $P$ sends no message that precedes any message that arrives at $Q$. 
Let $t_s$ be the latest time before $B$ such that $P$ sends a message that precedes some message which arrives at $Q$ in $E_{imp}$. 
After $t_s$, we extend $E_{imp}$ into two different executions $E_1$ and $E_2$. 
Execution $E_2$ is $E_{imp}$ extended without any transaction. Thus $x$ and $y$ are visible after some time $t_{ev}$. Based on our assumption, $t_{ev} \geq B$. In $E_2$, $C_r$ starts $ROT$ after $t_{ev}$.
In $E_1$, $C_r$ starts $ROT$ after $t_s$ (and before $B$) and some $t_P\in[A, B)$. We delay any message which $P$ sends after $t_s$ in $E_1$.
Furthermore, in both $E_1$ and $E_2$, $t_Q > t_{ev}$. 
According to our assumption, after $t_s$, $P$ does not send any message which precedes some message that arrives at $Q$ in $E_2$. As we delay the messages which $P$ sends after $t_s$ in $E_1$, thus before $t_Q$, $Q$ is unable to distinguish between $E_1$ and $E_2$. After $t_Q$ (inclusive), according to the message schedule of $ROT$, by the time when $Q$ has sent one message to $C_r$ during $ROT$, $Q$ is still unable to distinguish between $E_1$ and $E_2$.
In $E_2$, since $C_r$ starts $ROT$ after $t_{ev}$, $ROT$ returns $(x,y)$ by progress.
The client-side algorithm $\mathcal{A}$ of $C_r$ to output the return value of $ROT$ is a successful algorithm. Since given $m_{resp, P}$ and $m_{resp,Q}$, $\mathcal{A}$ outputs $(x,y)$,
then by one-version property, $m_{resp,Q}$ reveals one and only one between $x$ and $y$. (Otherwise, if $m_{resp,Q}$ can reveal another value $v$ other than $x$ and $y$, then we can obtain a successful algorithm which outputs $x, y, v$ given $m_{resp, P}$ and $m_{resp,Q}$, violating one-version property.) By $Q$'s indistinguishability between $E_1$ and $E_2$, in $E_1$, $m_{resp,Q}$ reveals one and only one between $x$ and $y$. W.l.o.g., let $m_{resp,Q}$ reveal $x$. By the construction of $E_{prefix}$, the return value of $ROT$ in $E_1$ cannot include $\bot$. As $C_r$ has not requested any transaction before, then in $E_1$, the return value depends solely on $m_{resp,P}$ and $m_{resp,Q}$. As the client-side algorithm $\mathcal{A}$ is successful, thus $\mathcal{A}$ cannot output a value other than $x$ for object $X$. As a result, $ROT$ returns $x$ in $E_1$. A contradiction to the assumption that if $t_P\in[A, B)$ (which matches $E_1$), then ROT may not return $x$ or $y$.

We prove the second statement also by contradiction. Suppose that in some $E_{imp}$, for some $\tau\in[A, t)$, some $ROT$ such that $t_Q = \tau$ returns $x$ or $y$. By Lemma \ref{lma:atomicity}, $ROT$ returns $(x,y)$. 
Then we construct $E_{old}$ which is the same as $E_{imp}$ except that in $E_{old}$, $ROT$ starts before $B$. In $E_{old}$, let $t_P\in(t_s, B)$ and let $t_Q = \tau$; all messages sent by $P$ after $t_s$ are delayed. Thus $Q$ is unable to distinguish between $E_{old}$ and $E_{imp}$ by the time when $Q$ has sent one message to $C_r$ (for $ROT$). Since $ROT$ returns $(x,y)$ in $E_{imp}$, then $m_{resp,Q}$ reveals $x$ or $y$ in $E_{old}$. 
By the construction of $E_{prefix}$, the return value of $ROT$ in $E_{old}$ cannot include $\bot$. As $C_r$ has not requested any transaction before, then in $E_{old}$, the return value depends solely on $m_{resp,P}$ and $m_{resp,Q}$, which must include $x$ or $y$. A contradiction to the assumption in the statement of the lemma.
\end{proof}

\begin{figure}[!h]
    \centering
    \includegraphics[width=0.5\textwidth]{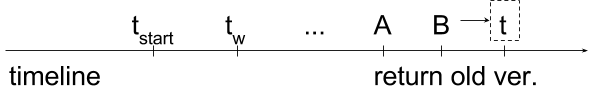}
    \caption{Timeline in Lemma \ref{lma:helper}}
    \label{fig:helper}
\end{figure}

As illustrated in Figure \ref{fig:helper}, Lemma \ref{lma:helper} is based on an assumption that before $B$, old versions are returned (if $ROT$ is appropriately added), shows that $B$ can be prolonged to time $t$. The proof of Lemma \ref{lma:helper} relies on fast read-only transactions.
What remains is the complete proof of Theorem \ref{thm:imp}, which proves Proposition \ref{prop:k},  Proposition \ref{prop:k-2} and  Proposition \ref{prop:base} as well.

\begin{proof}[Proof of Theorem \ref{thm:imp}]
We first prove Proposition \ref{prop:k-2} for any positive $k$ by mathematical induction and then show that $E_{imp}$ indeed violates progress according to Definition \ref{def:visible}. 

By mathematical induction, we start with the base case, i.e., Proposition \ref{prop:base} and Proposition \ref{prop:k-2} for $k = 1$.
Let $A = t_{start}$ and let $B = t_w$. By symmetry, we need only to prove Proposition \ref{prop:base} for $P=P_X$. We show that given $P$, for any $t_P\in [A, B)$, if $C_r$ starts $ROT$ before $t_P$, then $ROT$ may not return $x$ or $y$. For this $ROT$, as $WOT$ has not yet started, $m_{resp, P}$ cannot reveal $x$ or $y$. By one-version property, $m_{resp, P}$ reveals at most one version $v_1$ of $X$ and  $\{m_{resp, P}, m_{resp, Q}\}$ also reveals at most one version $v_2$ of $X$. Therefore $v_1 = v_2\neq x$. As $C_r$ has requested no transaction before, the return value of $ROT$ solely depends on $m_{resp, P}$ and $m_{resp, Q}$. As the client-side algorithm of $C_r$ for the return value of $ROT$ is a successful algorithm, $ROT$ returns $v_1 = v_2\neq x$ for object $X$. (Due to $E_{prefix}$, $ROT$ cannot return $\bot$.) Then by Lemma \ref{lma:atomicity},  $ROT$ may not return $x$ or $y$.
Thus Lemma \ref{lma:helper} applies. As a result, after $B = t_w$, $P$ must send at least one message that precedes some message that arrives at $Q$.
Therefore, Proposition \ref{prop:base} is true for either $P\in\{P_X, P_Y\}$.
Following Proposition \ref{prop:base}, recall the definition of $m_0$ and $m_1$. Let $m_0$ and $m_1$ be sent in $E_{imp}$.
Recall that $T_1$ is the time when the first message preceded by $m_1$ arrives at $D_0$. According to Lemma \ref{lma:helper}, for any $t\in [A, T_1)$, if $C_r$ starts $ROT$ before $t$ and $t_{D_0} = t$, then $ROT$ may not return $x$ or $y$, which proves Proposition \ref{prop:k-2} for $k = 1$.

We continue with the inductive step from case $k$ to case $k+1$.
For our assumption on $k$, let $A = T_{k-1}$, $B = T_k$, $P = D_{k-1}$ and $Q = D_k$. According to the definition of $T_k$, $T_k$ is at least the time when $m_k$ is received. By Proposition \ref{prop:k} for case $k$, $m_k$ is sent at least after $T_{k-1}$. Therefore, $T_k > T_{k-1}$, or $B > A$. Thus Lemma \ref{lma:helper} applies again. As a result, after $T_k$, $D_{k+1} = D_{k-1}$ must send at least one message $m_{k+1}$ which precedes some message that arrives at $D_k$; let $m_{k+1}$ be sent and then for any $t\in[T_{k-1}, T_{k+1})$, if $C_r$ starts $ROT$ such that $T_{D_k} = t$, then $ROT$ may not return $x$ or $y$, which proves  Proposition \ref{prop:k} and  Proposition \ref{prop:k-2} for case $k+1$. Therefore, we conclude Proposition \ref{prop:k-2} for any positive number $k$.

We show that $E_{imp}$ violates progress by contradiction. Suppose that $E_{imp}$ does not violate progress. As there is no other write since the start of $WOT$, then in $E_{imp}$ there is finite time $\tau$ such that any read of object $X$ (or $Y$) which starts at any time $t\geq \tau$ returns $x$ (or $y$).
We have shown that $T_{k+1} > T_{k}$ for any positive $k$. Thus for any finite time $\tau$, there exists $K$ such that for any $k\geq K$, $T_k > \tau$.
Then there exists some $k$ for which $C_r$ starts $ROT$ at some $t \geq \tau$ and $t_{D_{k-1}}$ is less than $T_k$. By Proposition \ref{prop:k-2}, $ROT$ may not return $x$ or $y$. A contradiction. Therefore we find an execution $E_{imp}$ where two values of the same write-only transaction can never be visible, violating progress.
\end{proof}

\section{Proof of Theorem \ref{thm:main}}

In this section, as a proof of Theorem \ref{thm:main}, we show that if some implementation provides invisible read-only transactions, then we reach a contradiction. 
In other words, we show that for every implementation that provides fast read-only transactions, read-only transactions are \emph{visible}.

\subsection{Visible transactions}

From Definition \ref{def:trace}, a read-only transaction $T$ is not invisible if for some client $C$ and $C$'s invocation $I$ of $T$, some execution $E$ (until $I$) can be continued arbitrarily and every execution $E^-$ without $I$ is different from $E$ in addition to the message exchange with $C$ (during the time period of $I$). 
We note that in this case, $T$ does not necessarily leave a trace on the storage. It is possible that for some invocation $I$ of $T$, some execution $E$ (until $I$) can be continued arbitrarily and there is some execution $E^-$ without $I$ which is the same as $E$ except for the message exchange with $C$ (during the time period of $I$). 

This motivates us to define the notion of being visible stronger than that of being not invisible, in Definition \ref{def:strong-visible} below. In Definition \ref{def:strong-visible}, we assume that (1) for each object, some non-$\bot$ value has been visible; (2) the client $C$ which invokes $I$ has not done any operation before $I$; and (3) during $I$, $C$ sends exactly one message $m$ to the servers involved which receive $m$ at the same time, while after the reception of $m$, all servers receive no message before $I$ ends (but still respond to $C$).
We note that even under these assumptions, Definition \ref{def:strong-visible} still shows a strictly stronger notion than being not invisible.

\begin{definition}[Visible transactions]
\label{def:strong-visible}
We say that transaction $T$ is visible if for any invocation $I$ of $T$, any execution $E$ (until $I$) can be continued arbitrarily and every execution $E^-$ without $I$ is different from $E$ in addition to the message exchange with the client which invokes $I$ (during the time period of $I$). 
\end{definition}

Clearly, the definition of visible transactions does not yet quantify the difference between $E$ and $E^-$, or show how much information is exposed by a visible transaction.
In this proof, we quantify the exposed information by proving Proposition \ref{prop:strong}. 
Like Definition \ref{def:strong-visible}, we assume in Proposition \ref{prop:strong} that (1) for each object, some non-$\bot$ value has been visible; (2) the clients $\mathcal{D}$ which invoke $S_T$ have not done any operation before $S_T$; and (3) during $S_T$, every client $C$ in $\mathcal{D}$ sends exactly one message $m$ to the servers involved which receive $m$ at the same time, while after the reception of $m$, all servers receive no message before $S_T$ ends (but still respond to $C$).
The property described in Proposition \ref{prop:strong} is a strictly stronger variant of visible transactions. (To see this, one lets $I\in S_1$ and $I\notin S_2$.) 
Therefore, if we prove Proposition \ref{prop:strong}, then we also prove Theorem \ref{thm:main}. 

\begin{proposition}[Stronger variant of visible transactions]
\label{prop:strong}
Given any causally consistent storage system that provides fast read-only transactions, for some read-only transaction $T$, 
for any set $\mathcal{D}$ of clients and $\mathcal{D}$'s set $S_T$ of concurrent invocations\footnote{Some invocations are said to be concurrent here if the time period between the start and end of these invocations are the same (stronger than the common definition of concurrency).} of $T$, for any subset $S_1\subseteq S_T$, 
any execution $E_1$ where only $S_1$ is invoked (the prefix until $S_1$) can be continued arbitrarily and every execution $E_2$ where only $S_2$ is invoked is different from $E_1$ in addition to the message exchange with $\mathcal{D}$ (during the time period of $S_T$) for any subset $S_2\subseteq S_T$ where $S_2\neq S_1$.
\end{proposition}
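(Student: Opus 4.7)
The plan is to prove Proposition \ref{prop:strong} by contradiction, following the structure of the proof sketch of Theorem \ref{thm:main} and generalizing it to handle sets of concurrent invocations. I would fix $T = (r(X)*, r(Y)*)$ and suppose, for contradiction, the existence of $\mathcal{D}$, $S_T$, distinct subsets $S_1, S_2 \subseteq S_T$, and executions $E_1$ (invoking only $S_1$) and $E_2$ (invoking only $S_2$) that agree everywhere except in the message exchange with $\mathcal{D}$ during the time period of $S_T$. Because $S_1 \neq S_2$, I can pick, swapping the roles of $E_1$ and $E_2$ if necessary, an invocation $I \in S_1 \setminus S_2$ issued by some client $C_I \in \mathcal{D}$, and set up $C_I$ to play the role of $C_r$ from the sketch of Theorem \ref{thm:main}.

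Next I would build the hybrid execution $E_{1,2}$ that is indistinguishable from $E_1$ to $P_X$ and from $E_2$ to $P_Y$. Using fast read-only transactions, I would schedule the single message that $C_I$ sends during $I$ to arrive at $P_X$ exactly as in $E_1$ at time $T_1$, but to arrive at $P_Y$ only after a fresh client $C \notin \mathcal{D}$ (added after the time $T_2$ at which $S_T$ ends) executes $w(X)x$ followed by $w(Y)y$ and the value $y$ becomes visible at some time $\tau_y$. Symmetrically, messages from every client in $\mathcal{D}$ whose invocation lies in $S_2 \setminus S_1$ are delayed with respect to $P_X$. Since the only differences between $E_1$ and $E_2$ lie in message exchanges with $\mathcal{D}$, this one delay pattern suffices to make $P_X$ unable to distinguish $E_{1,2}$ from $E_1$ and $P_Y$ unable to distinguish $E_{1,2}$ from $E_2$, at least through the interval during which $C$'s writes take place.

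The contradiction then follows the pattern of the Theorem \ref{thm:main} sketch. In $E_{1,2}$, the response of $P_X$ to $I$ must reveal some $x^* \neq x$ (since $w(X)x$ has not reached $P_X$), while $P_Y$'s eventual response to the delayed message must reveal $y$: indeed, extending $E_2$ with a fresh invocation of $T$ by $C_I$ issued just after $\tau_y$ and applying progress together with $P_Y$'s inability to distinguish that extension from the corresponding extension of $E_{1,2}$ forces the response $y$. Hence $I$ in $E_{1,2}$ returns $(x^*, y)$, so every causal serialization must order $w(X)x^*$ after $w(X)x$. Continuing $E_{1,2}$ with one more invocation of $T$ by $C_I$ after both $x$ and $y$ are visible forces the return value $(x,y)$ by progress, which in turn forces $w(X)x$ after $w(X)x^*$ in the same serialization, the desired contradiction.

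The delicate step I expect to spend the most care on is designing the simultaneous delay schedule for all of $\mathcal{D}$'s messages so that the two indistinguishabilities hold despite arbitrarily many other concurrent invocations in $S_T \setminus \{I\}$; the hypothesis that $E_1$ and $E_2$ already coincide away from $\mathcal{D}$'s messages is what makes a consistent choice possible, but I must verify that no server outside $\{P_X, P_Y\}$ is forced to deviate in $E_{1,2}$, and that the progress argument used to pin down $P_Y$'s response is not disturbed by the other delayed messages from $\mathcal{D}$.
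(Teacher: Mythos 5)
Your proposal is correct and follows essentially the same route as the paper's proof: negate the statement, fix $ROT=(r(X)*,r(Y)*)$, pick a client whose invocation lies in the symmetric difference of $S_1$ and $S_2$, build the hybrid $E_{1,2}$ that $P_X$ cannot distinguish from $E_1$ and $P_Y$ cannot distinguish from $E_2$ by delaying the cross messages, add $w(X)x\rightsquigarrow w(Y)y$ after $T_2$, force $P_Y$'s late response to reveal $y$ via progress and indistinguishability with an extension of $E_2$, and derive the causal-consistency contradiction from the return value $(x^*,y)$ — including correctly identifying the delicate point (that $E_1=E_2$ away from $\mathcal{D}$'s messages is what keeps $E_{1,2}$ evolving identically after $T_2$), which the paper works out explicitly. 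The only cosmetic caveat is that handling $S_1\setminus S_2=\emptyset$ is better phrased as mirroring the construction (picking the client in $S_2\setminus S_1$ and reversing the roles of $P_X$ and $P_Y$, i.e., the order of the two writes in the continuation of $E_1$) rather than literally swapping $E_1$ and $E_2$, since the continuation freedom in the hypothesis attaches to $E_1$; the paper disposes of this case with an equally terse ``w.l.o.g.''
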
 

To see that this variant quantifies the exposed information, we count the number of possibilities of these executions that are the same to all clients except for $\mathcal{D}$, with a subset of clients $ss$'s invocations $Inv_{ss}$ of $T$ at the same time where $ss\subseteq \mathcal{D}$. If Proposition \ref{prop:strong} is true for $S_T = Inv_{\mathcal{D}}$, then the number of possibilities is lower-bounded by the number $num$ of subsets of $\mathcal{D}$, implying the amount of difference on the message exchange among these executions.

\subsection{Executions}
Our proof is by contradiction. Suppose that for any read-only transaction $T$, 
for some set $\mathcal{D}$ of clients and $\mathcal{D}$'s set $S_T$ of concurrent invocations of $T$, for some subset $S_1\subseteq S_T$, 
some execution $E_1$ where only $S_1$ is invoked (the prefix until $S_1$) can be continued arbitrarily but still there exists some execution $E_2$ where only $S_2$ is invoked and which is the same as $E_1$ except for the message exchange with $\mathcal{D}$ during the time period of $S_T$ for some subset $S_2\subseteq S_T$ where $S_2\neq S_1$.

We thus construct two executions $E_1$ and $E_2$ following our idea of quantifying information previously. We first recall our construction of the set $S_T$. (We are allowed to do so, as the set is assumed so in the assumption for Proposition \ref{prop:strong}.) Let $S_T$ be the invocations of $ROT=(r(X)*, r(Y)*)$ each of which is invoked by one client in $\mathcal{D}$ at the same time $t_0$. 
Furthermore, we consider $S_T$ that are performed as follows. 
By fast read-only transactions, all messages which a client in $\mathcal{D}$ sends during $ROT$ arrive at $P_X$ and $P_Y$ respectively at the same time. Let $T_1$ denote this time instant and let $T_2$ be the time when $ROT$ eventually ends. 
During $[T_1, T_2]$, by fast read-only transactions, $P_X$ and $P_Y$ receive no message. If there is any such message, they are delayed to at least after $T_2$ but eventually arrive before $\tau_y$.

Now in some $E_1$, only $S_1$ is invoked but every other detail about the execution of $S_T$ above remains the same. We continue $E_1$ with client $C$ performing two writes $w(X)x$ and $w(Y)y$ after $T_2$ to establish $w(X)x\rightsquigarrow w(Y)y$ according to Definition \ref{def:ahamad_causal}. Moreover, after $T_2$, the clients in $\mathcal{D}$ do not invoke any operation. (We are allowed to do so, as $E_1$ can be continued arbitrarily in our assumption for contradiction.)
According to our assumption for contradiction, some $E_2$ is the same as $E_1$ except for the message exchange with $\mathcal{D}$ during the time period of $S_T$, although in $E_2$, only $S_2$ is invoked and $S_2\neq S_1$. Both executions are illustrated in Figure \ref{fig:edi} before the two writes.
In both executions, $y$ is eventually visible. We denote by $\tau$ the time instant after which $y$ is visible in both executions.

\begin{figure}[!h]
    \centering
  \begin{subfigure}[b]{0.44\textwidth}
    \includegraphics[width=\textwidth]{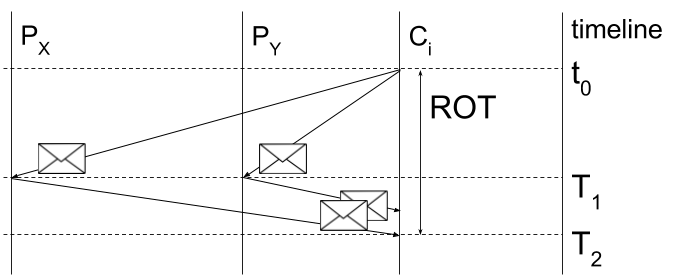}
    \caption{Message schedule of $E_i$}
    \label{fig:edi}
    \end{subfigure}
  \begin{subfigure}[b]{0.54\textwidth}
    \centering
    \includegraphics[width=\textwidth]{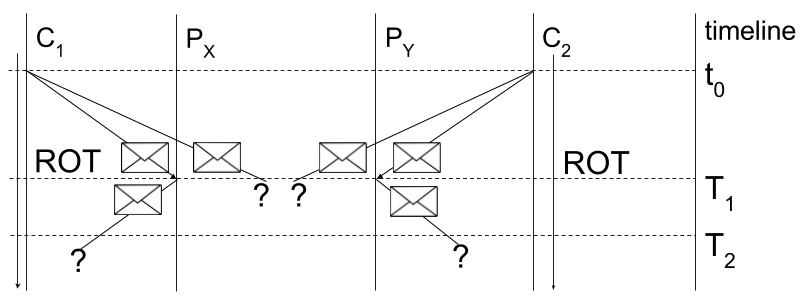}
    \caption{Message schedule of $E_{1,2}$}
    \label{fig:ed1d2}
    \end{subfigure}
    \caption{Construction and extension of $E_i$}
    \label{fig:no-name-2}
\end{figure}

For $i\in\{1,2\}$, let $\mathcal{D}_i$ be the subset of clients which invoke $S_i$ in $E_i$. 
Let $C_i$ be any client in $\mathcal{D}_i$. 
As $E_2$ is the same as $E_1$ except for the message exchange with $\mathcal{D}$ during the time period of $S_T$, w.l.o.g., we assume that $\mathcal{D}_1\backslash\mathcal{D}_2 \neq \emptyset$. We denote by $C_r$ any client in $\mathcal{D}_1\backslash\mathcal{D}_2$ hereafter.

We next construct an execution $E_{1,2}$ based on $E_1$ and $E_2$ to help our proof. Our goal is to let $E_{1,2} = E_1 = E_2$ except for the communication with $\mathcal{D}$ (during the time period of $S_T$) until the same time $\tau$.
In $E_{1,2}$, every client in $\mathcal{D}_1\cup\mathcal{D}_2$ invokes $ROT$ at $t_0$.
As illustrated in Figure \ref{fig:ed1d2}, while every client $C_1\in\mathcal{D}_1$ invokes $ROT$, $P_X$ receives the same message from $C_1$ at the same time $T_1$ and no other message during $[T_1, T_2]$, and sends the same message to $C_1$ at the same time 
as in $E_1$;
$C_1$ sends the same message to $P_Y$ at the same time as in $E_1$, the reception of which may however be delayed by a finite but unbounded amount of time (see below).
Similarly, while every client $C_2\in\mathcal{D}_2$ invokes $ROT$, $P_Y$ receives the same message from $C_2$ at the same time $T_1$ and no other message during $[T_1, T_2]$, and sends the same message to $C_2$ at the same time 
as in $E_2$;
$C_2$ sends the same message to $P_X$ at the same time as in $E_2$, the reception of which may however be delayed by a finite but unbounded amount of time (see below).
For those clients in $\mathcal{D}_1\cap\mathcal{D}_2$, the messages which are said to be possibly delayed still arrive at $T_1$ and follow both the message schedules of $\mathcal{D}_1$ and $\mathcal{D_2}$ above. For the other clients, the messages are indeed delayed by a finite but unbounded amount of time. 
Furthermore, any message which $P_X$ or $P_Y$ sends to a process in $\mathcal{D}$ during $[T_1, T_2]$ is delayed to arrive at least after $T_2$.
Thus by $T_2$, $P_X$ is unable to distinguish between $E_1$ and $E_{1,2}$ while $P_Y$ is unable to distinguish between $E_2$ and $E_{1,2}$. 
As a result, the first message $m_{X,1}$ which $P_X$ sends after $T_2$ in  $E_{1,2}$ is the same message as in $E_1$, and the first message $m_{Y,1}$ which $P_Y$ sends after $T_2$ in  $E_{1,2}$ is the same message as in $E_2$.

According to our assumption for contradiction, $E_1$ and $E_2$ are the same except for the communication with $\mathcal{D}$ during $[t_0, T_2]$. In other words, $E_1$ and $E_2$ are the same regarding the message exchange among servers and message exchange between any server and any client after $T_2$. Therefore, the first message which $P_X$ sends after $T_2$ in $E_2$ is also $m_{X,1}$ and the first message which $P_Y$ sends after $T_2$ in $E_1$ is also $m_{Y,1}$. Therefore, the message exchange among servers in  $E_{1,2}$ continues in the same way as in $E_1$ as well as $E_2$ after $T_2$. Since $\mathcal{D}$ does not invoke any operation after $T_2$ in both executions, then after $T_2$, no client can distinguish between $E_1$ and $E_2$ and therefore the message exchange between any server and any client in  $E_{1,2}$ continues also in the same way as in $E_1$ as well as $E_2$ after $T_2$.
Then even if the delayed messages in  $E_{1,2}$ do not arrive before $\tau$, $E_{1,2} = E_1 = E_2$ except for the communication with $\mathcal{D}$ (during the time period of $S_T$) until the same time $\tau$. We reach our goal as stated previously.

\subsection{Proof}

Our proof starts with the extension of $E_2$ and $E_{1,2}$ since the time instant $\tau$. We show that in a certain extension, $P_Y$ is unable to distinguish between $E_2$ and $E_{1,2}$ and thus returns a new value, which breaks causal consistency. As we reach a contradiction here, we show the correctness of Proposition \ref{prop:strong} as well as that of Theorem \ref{thm:main}.
We also have two remarks on the proof of Proposition \ref{prop:strong}. First, the proof relies on the indistinguishability of servers between executions, implying that fast read-only transactions have to ``write'' to some server to break the indistinguishability (i.e., ``writing'' to a client without the client forwarding the information to a server is not an option).
Second, recall that to quantify the exposed information of read-only transactions, we count the number of possibilities of these executions that are the same to all clients except for $\mathcal{D}$, with a subset $ss$ of $\mathcal{D}$'s invocations $Inv_{ss}$ of $T$ at the same time. Now that the proof shows that Proposition \ref{prop:strong} is indeed true for $S_T = Inv_{\mathcal{D}}$, then the number of possibilities is lower-bounded by the number $2^n$ where $n = |\mathcal{D}|$, implying that each fast read-only transaction in $S_T$ contributes at least one bit in the message exchange.\footnote{The contribution is computed according to the information theory and coding theory. Consider $\mathcal{X}$ as a random variable that takes values in all these $2^n$ executions. Assume that $\mathcal{X}$ takes any one with equal probability. Then the entropy of $\mathcal{X}$ is $n$ bits. According to the coding theory, depending on how the messages exchanged in these executions code $S_T$, each fast read-only transaction may use more than one bits.}

\begin{figure}[!h]
    \centering
  \begin{subfigure}[b]{0.45\textwidth}
    \includegraphics[width=\textwidth]{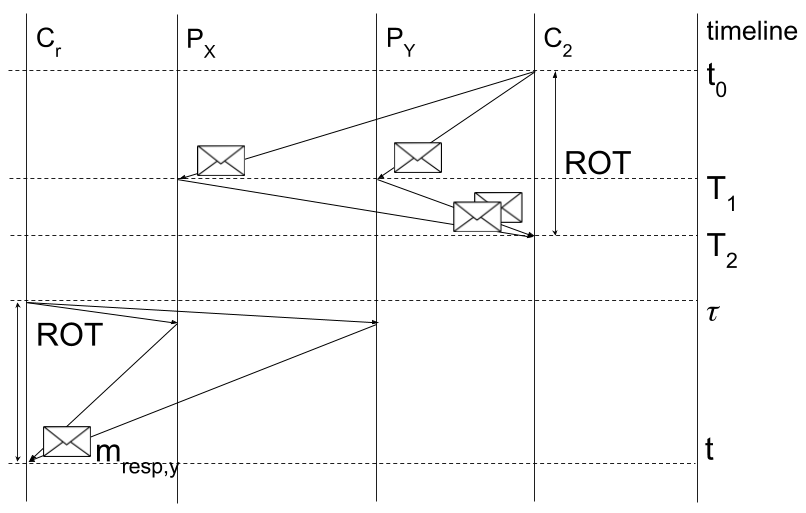}
    \caption{Extension of $E_2$}
    \label{fig:extend-e2}
    \end{subfigure}
    \hfill
  \begin{subfigure}[b]{0.45\textwidth}
    \includegraphics[width=\textwidth]{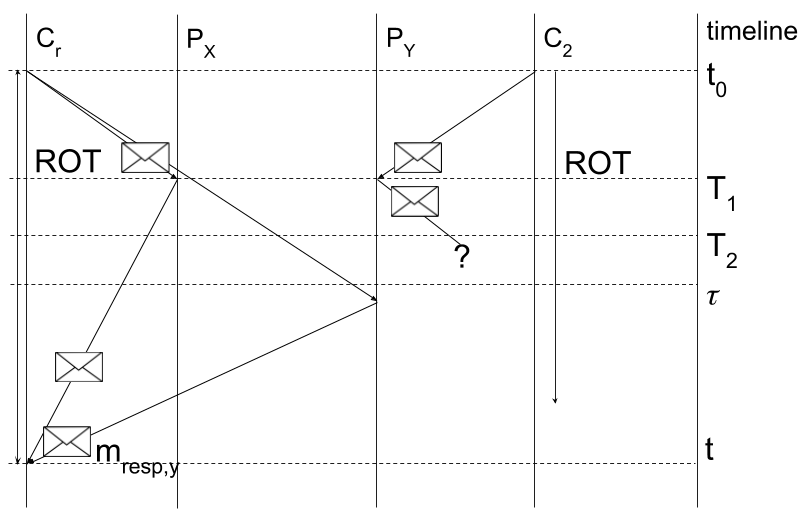}
    \caption{Extension of  $E_{1,2}$.}
    \label{fig:extend-ed1d2}
    \end{subfigure}
    \caption{Extension of two executions}
    \label{fig:extend}
\end{figure}

\begin{proposition}[Contradictory execution]
\label{prop:contradiction}
Execution $E_{1,2}$ can violate causal consistency.
\end{proposition}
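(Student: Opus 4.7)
The plan is to pin down the return value of $C_r$'s $ROT$ in a suitable extension of $E_{1,2}$ by indistinguishability from $E_2$, and then derive a contradiction with Definition \ref{def:causal-consistency} by having $C_r$ perform one more read-only transaction in the distant future and showing that no single serialization can accommodate both return values.

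First I would extend both $E_2$ and $E_{1,2}$ past $\tau$. In the extension of $E_2$, I let $C_r$ (which did nothing in $E_2$) freshly invoke $ROT$ just after $\tau_y$; in the extension of $E_{1,2}$, I let the message from $C_r$ to $P_Y$ that was deliberately delayed in the construction of $E_{1,2}$ finally arrive at $P_Y$ at the matching instant. Because $ROT$ is fast, I can schedule events so that from the arrival of $C_r$'s message at $P_Y$ until $P_Y$ sends back its reply, $P_Y$ receives no other message and exchanges no message with other servers. Consequently $P_Y$ cannot distinguish $E_2$ from $E_{1,2}$ during this window, and must send the same reply to $C_r$ in both. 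In the extension of $E_2$, progress forces that reply to reveal $y$ for $Y$, and by the one-version property it reveals exactly $y$. So in $E_{1,2}$ as well, $P_Y$'s reply reveals $y$. On the other side, $P_X$'s reply to $C_r$ in $E_{1,2}$ was produced by time $T_2$, strictly before $w(X)x$ starts; hence it can only reveal some $x^* \neq x$ for $X$. Therefore $C_r$'s $ROT$ in $E_{1,2}$ returns $(x^*, y)$.

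Next I would further extend $E_{1,2}$ by letting $C_r$ invoke a second transaction $ROT_1 = (r(X)*, r(Y)*)$ long after both $x$ and $y$ are visible; by progress $ROT_1$ returns $(x, y)$. Assume for contradiction that the extended $E_{1,2}$ is causally consistent. Definition \ref{def:causal-consistency} then yields a single total order $\mathcal{S}$ of all write-containing transactions together with every transaction in $C_r$'s local history, satisfying the three conditions of Definition \ref{def:serialization}. Condition (1) applied to $ROT$ forces the last $X$-write preceding $ROT$ in $\mathcal{S}$ to be $w(X)x^*$ and the last $Y$-write to be $w(Y)y$; applied to $ROT_1$ it forces the last $X$-write preceding $ROT_1$ to be $w(X)x$. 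Since $w(X)x \rightsquigarrow w(Y)y$ in $C$'s program order, condition (3) places $w(X)x$ before $w(Y)y$ and hence before $ROT$ in $\mathcal{S}$; combined with $w(X)x^* \neq w(X)x$ being the immediate $X$-predecessor of $ROT$, this forces $\mathcal{S}$ to put $w(X)x^*$ strictly between $w(X)x$ and $ROT$. But $ROT$ precedes $ROT_1$ in $C_r$'s local history, so $\mathcal{S}$ places $w(X)x^*$ after $w(X)x$ and before $ROT_1$, contradicting that $w(X)x$ is the last $X$-write before $ROT_1$.

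The main obstacle will be the indistinguishability step: I need to schedule the delayed and newly introduced messages across the two executions so that $P_Y$'s entire local view during its response to $C_r$ is identical in $E_2$ and in $E_{1,2}$. This requires simultaneously exploiting fast read-only transactions (to rule out any helper messages arriving at $P_Y$ during the response window and thereby distinguishing the two executions) and the one-version property (to argue that the single value revealed for $Y$ must be exactly $y$ and not some combination with earlier versions). Once that step is secured, the causal-ordering contradiction is short.
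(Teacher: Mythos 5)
Your proposal is correct and follows essentially the same route as the paper's own proof: extend $E_2$ with $C_r$ invoking $ROT$ after $\tau$ and let the delayed message arrive in $E_{1,2}$ at the matching time, use fast transactions plus the one-version property and progress to force $P_Y$'s identical reply to reveal $y$ while $P_X$'s reply (fixed before $w(X)x$) reveals some $x^*\neq x$, and then derive the serialization contradiction via a later $ROT_1$ returning $(x,y)$. The only cosmetic difference is that the paper phrases the argument with $C_r$ chosen from $\mathcal{D}_1\backslash\mathcal{D}_2$ in the multi-client setting of Proposition \ref{prop:strong}, which does not change the substance of your argument.
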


\begin{proof}[Proof of Proposition \ref{prop:contradiction}]
We first extend $E_2$ and $E_{1,2}$ after $\tau$. As illustrated in Figure \ref{fig:extend}, we let any client $C_r$ in $\mathcal{D}_1\backslash\mathcal{D}_2$ start $ROT$ immediately after $\tau$ in $E_2$. 
Then in both $E_2$ and $E_{1,2}$, we schedule the message sent from $C_r$ to $P_Y$ during $C_r$'s $ROT$ to arrive at the same time after $\tau$, and by fast read-only transactions and asynchrony, $\exists t$, during $[\tau, t]$, $P_Y$ receives no message but still responds to $C_r$. 
Based on our extension of $E_2$ and $E_{1,2}$, by $t$, $P_Y$ is unable to distinguish between $E_2$ and $E_{1, 2}$.
We also schedule any message which $P_X$ or $P_Y$ sends to $C_r$ arrives at the same time $t$.
Denote the message which $C_r$ receives from $P_Y$ at $t$ by $m_{resp, Y}$, which is thus the same in $E_2$ and $E_{1,2}$. Denote by $m_{resp, X}$, the message which $C_r$ receives from $P_X$ at $t$, which can be different in $E_2$ and $E_{1,2}$.

We now compute the return value of $ROT$ in $E_2$ and $E_{1,2}$. By progress, in $E_2$, $C_r$'s $ROT$ returns $y$ for $Y$. 
By one-version property, $m_{resp, Y}$ reveals exactly one version of $Y$, and $m_{resp, X}$ reveals no version of $Y$. Since $m_{resp, X}$ reveals no version of $Y$, $m_{resp, Y}$ cannot reveal a version of $Y$ different from $y$. In other words, $m_{resp, Y}$ must reveal $y$.
In $E_{1,2}$, $m_{resp, X}$ cannot reveal $x$ as $w(X)x$ starts after $T_2$. Then $m_{resp, X}$ must reveal some value $x^*\neq x$. As $m_{resp, Y}$ has already revealed $y$, messages $\{m_{resp, X}, m_{resp, Y}\}$ cannot reveal other versions of $X$ or $Y$. As in $E_{1,2}$, some values of $X$ and $Y$ have been visible, then the return value cannot be $\bot$; thus the return value of $ROT$ in $E_{1,2}$ is $(x^*, y)$.

We show that the return value $(x^*,y)$ in $E_{1,2}$ violates causal consistency by contradiction. Suppose that $E_{1, 2}$ satisfies causal consistency. Then by Definition \ref{def:causal-consistency}, 
we can totally order all $C_r$'s operations and all write operations such that the last preceding writes of  $X$ and $Y$ before $C_r$'s $ROT$ are $w(X)x^*$ and $w(Y)y$ respectively. Since $x\rightsquigarrow y$, then $w(X)x$ must be ordered before $w(Y)y$. This leads $w(X)x^*$ to be ordered after $w(X)x$. We now extend $E_{1,2}$ so that $C_r$ invokes $ROT_1 = (r(X)*, r(Y)*)$  after $x$ is  visible, which returns value $(x, y)$ by Definition \ref{def:visible}.
According to Definition \ref{def:causal-consistency}, the last preceding write of $X$ before $ROT_1$ must be $w(X)x$. However, $w(X)x^*$ has already been ordered after $w(X)x$ and thus the last preceding write of $X$ before $ROT_1$ is $w(X)x^*$. A contradiction.
\end{proof}

\section{Storage Assumptions}
For presentation simplicity, we made an assumption that servers store disjoint sets of objects. In this section, we show how our results apply to the non-disjoint case. 
A general model of servers' storing objects can be defined as follows. 
Each server still stores a set of objects, but no server stores all objects. For any server $S$, there exists object $o$ such that $S$ does not store $o$.
In this general model, when a client reads or writes some object $o$, the client can possibly request multiple servers all of which store $o$. W.l.o.g., we assume that when client $C$ accesses $o$, $C$ requests all servers that store $o$.

\subsection{Impossibility of fast transactions}
We sketch here why Theorem \ref{thm:imp} still holds in the general model. To prove Theorem \ref{thm:imp}, we still construct a contradictory  execution $E_{imp}$ which, to satisfy causality, contains an infinite number of messages and then violates progress. In $E_{imp}$, $C_w$ issues a write-only transaction $WOT$ which starts at time $t_w$ and writes to all objects. In other words via $WOT$, $C_w$ writes all objects. Since $t_w$, $WOT$ is the only executing transaction. 

The proof is still by induction on the number of messages. (Here $k$ does not denote the number of messages; rather $k$ denotes the number of asynchronous rounds of messages as shown by our inductive step.) We sketch the base case and the inductive step below as the main idea of the proof is the same as in Appendix A. We add an imaginary read-only transaction $ROT$ which reads all objects to $E_{imp}$.
Let $n_{obj}$ be the number of objects read by $ROT$.
For each server $S$, let $m_{resp,S}$ be the message of response from $S$ during $ROT$ if $ROT$ is invoked after all values written by $WOT$ are eventually visible. Let $ss$ be the smallest set of servers such that $\{m_{resp,S}|\forall S\in ss\}$ reveals exactly $n_{obj}$ versions. By one-version property, $ss$ contains at least two servers.
In the base case, we show that after $t_w$, there are at least two servers each of which sends some message that precedes some message which arrives at another server in $ss$. By contradiction. Suppose that at most one server sends such message. 
Since at most one server sends some message that precedes some message which arrives at another server in $ss$, then we assume that one server $R\in ss$ does not send any such message. 
We now make our $ROT$ concrete.
We let the request of $ROT$ arrive earlier than $t_w$ at $R$ and delay the request $ROT$ at $ss\backslash\{R\}$. Based on our assumption on $R$, $ss\backslash\{R\}$ is unable to distinguish between the case where $ROT$ starts before $t_w$ and the case where $ROT$ has not started at all. Then all values written by $WOT$ are eventually visible. After that, we let the request of $WOT$ arrive at $ss\backslash\{R\}$. 
By fast read-only transactions and our assumption for contradiction, after $t_w$, there can be no communication between $R$ and $ss\backslash\{R\}$. Therefore $ss\backslash\{R\}$ returns values written by $WOT$; however $R$ has to return some value not written by $WOT$. We note that $ss\backslash\{R\}\neq\emptyset$ and thus by Lemma \ref{lma:atomicity}, the return value of $ROT$ must break causal consistency. A contradiction.

Therefore we conclude that after $t_w$, there are at least two servers each of which sends some message that precedes some message which arrives at another server in $ss$. Let $S_{base}$ be the set of servers (whether in $ss$ or not) which do so. Let $M_{base}$ be the set of first messages which (1) a server in $S_{base}$ sends and (2) precedes some message that arrives at another server in $ss$.

We now sketch the inductive step. Let $P$ be the first server which receives some message $m_P$ in $M_{base}$. Then from case $k = 1$ to case $k = 2$, we show that after the reception of $m_P$, at least one server sends some message that precedes some message which arrives at another server in $ss$. By contradiction. Suppose that after the reception of $m_P$, no server sends any such message.
We now make our $ROT$ concrete.
We let the request of $ROT$ arrive at one server $R$ in $ss$ before the reception of $m_P$, and let the request of $ROT$ arrive at $ss\backslash\{R\}$ after all values written by $ROT$ are visible. 
Considering the possibility that the request of $ROT$ can arrive at another server before $t_w$, $R$ returns at most one old version for each object $R$ stores. However, by our assumption for contradiction, $ss\backslash\{R\}$ is unable to distinguish this case from the case where $ROT$ starts after all values written by $WOT$ are visible and therefore returns values written by $WOT$. We note that $ss\backslash\{R\}\neq\emptyset$ and thus by Lemma \ref{lma:atomicity}, the return value of $ROT$ must break causal consistency. A contradiction.

Therefore we conclude that after the reception of $m_P$, at least one server sends some message that precedes some message which arrives at another server in $ss$.
Let $S_{2}$ be the set of servers (whether in $ss$ or not) which sends some message that precedes some message which arrives at another server in $ss$ after receiving a message in $M_{base}$. Let $M_{2}$ be the set of first messages which (1) a server in $S_{2}$ sends and (2) precedes some message that arrives at another server in $ss$.

With an abuse of notations, let $P$ be the first server which receives some message $m_P$ in $M_{2}$. For case $k = 3$, we can similarly show that after the reception of $m_P$, at least one server sends some message that precedes some message which arrives at another server in $ss$.
In this way, we add at least one message in each inductive step and also make progress in time which at the end goes to infinity. This completes our construction of $E_{imp}$ as well as the proof sketch of Theorem \ref{thm:imp} in the general model of servers' storing objects.

\subsection{Impossibility of fast invisible transactions}
We sketch here why Theorem \ref{thm:main} still holds in the general model.
To prove Theorem \ref{thm:main}, 
we consider any execution $E_1$ where some client $C_r$ (which has not requested any operation before) starts transaction $ROT$ which reads all objects at the time $t_0$ and before $t_0$, some values of $X$ and $Y$ have been visible. We continue $E_1$ with some client $C$ executing writes which establishes a chain of causal relations. Let $O=\{o_1, o_2,\ldots, o_{n_{obj}}\}$ be the set of all objects. $C$ executes writes $Wr = \{w(o)v|\forall o\in O\}$ so that $\forall k\in\mathbb{Z}$, $2\leq k\leq n_{obj}$, $w(o_{k-1})v_{k-1}\rightsquigarrow w(o_k)v_k$.

Our proof is by contradiction. Suppose that transaction $ROT$ is invisible. Then no matter how $E_1$ is scheduled, there exists some execution $E_2$ such that $E_2$ is the same as $E_1$ except that (1) $C_r$ does not invoke $ROT$, and (2) the message exchange with $C_r$ during the time period of $ROT$ is different. Below we first schedule $E_1$ and then construct $E_{1,2}$. We later show $E_{1,2}$ violates causal consistency.

By fast read-only transactions, we can schedule messages such that the message which $C_1$ sends during $ROT$ arrives at every server respectively at the same time. Let $T_1$ denote this time instant and let $T_2$ be the time when $ROT$ eventually ends. During $[T_1, T_2]$, every server receives no message but still respond to $C_r$.
Clearly all writes of $C$ occur after $T_2$, while $C_r$ does no operation after $T_2$.
All delayed messages eventually arrive before all values written by $C$ can be visible. In $E_1$, we denote the time instant after which all values written are visible by $\tau$.
Next we construct execution $E_{1,2}$ that is indistinguishable from $E_1$ to $P_X$ and from $E_2$ to $P_Y$. The start of $E_{1,2}$ is the same as $E_1$ (as well as $E_2$) until $t_0$.
At $t_0$, $C_r$ still invokes $ROT$.

Before we continue the construction of $E_{1,2}$, we consider an imaginary $ROT$ in $E_2$ which starts after $\tau$. For each server $S$, let $m_{resp,S}$ be the message of response from $S$ during this imaginary $ROT$ if $ROT$ is invoked after $\tau$. Let $ss$ be the smallest set of servers such that $\{m_{resp,S}|\forall S\in ss\}$ reveals exactly $n_{obj}$ versions. By one-version property, there are at least two servers in $ss$. Let $R$ be one server in $ss$ such that $m_{resp,R}$ does not reveal $v_1$.

We now go back to our construction of $E_{1,2}$. We let all servers except for $R$ receive the same message from $C_r$ and send the same message to $C_r$ at the same time as in $E_1$;
$C_r$ sends the same message to $R$ at the same time as in $E_1$, the reception of which is however delayed by a finite but unbounded amount of time. In addition, during $[T_1, T_2]$, all servers receive no message as in $E_1$ (as well as $E_2$).
Thus by $T_2$, all servers except for $R$ are unable to distinguish between $E_1$ and $E_{1,2}$ while $R$ is unable to distinguish between $E_2$ and $E_{1,2}$. Since $E_1 = E_2$ except for the communication with $C_r$ during $[t_0, T_2]$, then $E_{1,2} = E_1 = E_2$ except for the communication with $C_r$ during $[t_0, T_2]$.

Now based on the executions constructed above, we can similarly extend $E_2$ and $E_{1,2}$ after $\tau$. We let $C_r$ start $ROT$ immediately after $\tau$ in $E_2$. In both $E_2$ and $E_{1,2}$, by fast read-only transactions, we schedule the message sent from $C_r$ to $R$ during $C_r$'s $ROT$ to arrive at the same time after $\tau$, and $\exists t$, during $[\tau, t]$, $R$ receives no message but still responds to $C_r$. 
Thus by $t$, $R$ is unable to distinguish between $E_2$ and $E_{1, 2}$.

We next compute the return value of $ROT$ in $E_{1,2}$. By progress and the fact that $R\in ss$, in $E_2$, $R$ returns some new values written by $C$, and then by indistinguishability, in $E_{1,2}$, $R$ returns the same.
However, in $E_{1,2}$, all servers except for $R$ can only return some values which are written before $Wr$. 
W.l.o.g., the return value of $ROT$ in $E_{1,2}$ includes some value $v^*_1\neq v_1$ for object $o_1$ and $v_k$ for some object $o_k$.
According to our assumption, $E_{1,2}$ satisfies causal consistency.
By Definition \ref{def:causal-consistency}, 
we can totally order all $C_r$'s operations and all write operations in $E_{1,2}$ such that the last preceding writes of $o_1$ and $o_k$ before $C_r$'s $ROT$ are $w(o_1)v^*_1$ and $w(o_k)v_k$ respectively. 
This leads $w(o_1)v^*_1$ to be ordered after $w(o_1)v_1$. However, if we extend $E_{1,2}$ so that $C_r$ invokes $ROT_1 = (r(o_1)*, r(o_k)*)$ after $\tau$, then $ROT_1$ returns value $(v_1, v_k)$ and if we do total ordering of $E_{1,2}$ again, then the last preceding write of $o_1$ before $ROT_1$ must be $w(o_1)v_1$, which leads $w(o_1)v_1$ to be ordered after $w(o_1)v^*_1$, a contradiction.
Therefore, we can conclude that $E_{1,2}$ violates causal consistency, which completes our proof sketch of Theorem \ref{thm:main} in the general model.

\section{Alternative Protocols}
For completeness of our discussion in Section 6, we here sketch two implementations, one using asynchronous propagation of information among servers and one assuming the existence of a global accurate block.

\subsection{Visible fast read-only transactions}

We sketch below an algorithm $\mathcal{A}$ for fast read-only transactions. To comply with our Theorem \ref{thm:imp}, we restrict all transactions to be read-only.
The goal of $\mathcal{A}$ is to better understand our Theorem \ref{thm:main}. Theorem \ref{thm:main} shows that fast read-only transactions are visible. The intuition of Theorem \ref{thm:main} is that after a fast read-only transaction $T$, servers may need to communicate the information of $T$ among themselves. However, it is not clear when such communication occurs.
The COPS-SNOW \cite{lu_snow_2016} algorithm shows that the communication can take place during clients' requests of writes. 
Our algorithm $\mathcal{A}$ below shows that the communication can actually take place asynchronously. In addition, while COPS-SNOW guarantees a value to be visible immediately after its write, $\mathcal{A}$ guarantees only eventual visibility; thus a trade-off between the freshness of values and latency perceived by clients is also implied.

We sketch below first the data structure which each process maintains. All processes maintain locally their logical timestamps and update their timestamps whenever they find their local ones lag behind. They also move their logical timestamps forward when some communication with other processes is made. 
Every client additionally maintains the causal dependencies of the current operation (i.e., the operations each of which causally precedes the current one), called context. The maintenance of context is done in a similar way as COPS \cite{lloyd_settle_2011} and COPS-SNOW \cite{lu_snow_2016}. 
Every server needs to store the causal dependencies passed as an argument of some client' write.
Every server additionally maintains a data structure called $OldTx$ for each object stored. 
We next sketch how writes and read-only transactions are handled.
\begin{itemize}
\item Every client sends its logical timestamp as well as context when requesting a write. A server stores the value written along with the server's updated logical timestamp, causal dependencies, and returns to the client.
\item Every client $C$ sends its logical timestamp as well as context when requesting a read-only transaction $tx$. A server first searches $tx$ in $OldTx$, and returns a pre-computed value according to entry $tx$ in $OldTx$ if $tx\in OldTx$. Otherwise, a server returns some value already observed by $C$ (in its context) or some value marked as ``visible''.
\end{itemize} 
We finally sketch how $OldTx$ is maintained and communicated (where asynchronous propagation mentioned in Section 6 takes place).
\begin{itemize}
\item After a server $S$ responds to a client's write request of value $w$, $S$ sends a request to every server which stores some value $v$ such that $v\rightsquigarrow w$. Any server responds such request with its local $OldTx$ when $v$ is marked as ``visible''.
\item After $S$ receives a response from all servers which store some value that causally precedes $w$, $S$ stores their $OldTx$s into $S$'s local one,  chooses a value already observed by the client of $tx$ or a value $w^*$ which is written before $w$\footnote{In order to choose a value correctly, in the algorithm, $S$ actually sends a request after all values written before $w$ are marked as ``visible''. Also, $S$ does not choose a value for some $tx$ which $S$ has chosen before (which can happen when some value written before $w$ is marked as ``visible''). In this way, $S$ can choose $w^*$ as the last value written before $w$.} for each transaction $tx$ in $OldTx$ and marks $w$ as ``visible''.
\end{itemize} 
Any read-only transaction is stored and marked as ``current'' during its operation at any server. A ``current'' transaction $T$ is put in $OldTx$ when some value $w$ is ``visible'' and $T$ has returned a value written before $w$ of the same object.\\

\noindent \textbf{Proof sketch of Correctness.} Our algorithm $\mathcal{A}$ above provides fast read-only transactions.  As every message eventually arrives at its destination (and therefore asynchronous propagation eventually ends), $\mathcal{A}$ satisfies progress. We can show that $\mathcal{A}$ satisfies causal consistency by contradiction. Suppose that some execution $E$ violates causal consistency. Then $E$ includes at least one read-only transaction. Assume that in $E$, for some client $C$, the ordering of all writes and $C$'s transactions breaks causal consistency.
Clearly, without any read-only transaction, we can order all writes in a way that respects causality. In addition, we can also order all writes of the same object according to the increasing timestamps of these writes and still respect causality. (We call the ordering of writes of the same object according to the timestamps by object relation. In addition, we say that two writes $w_1\rightarrow w_2$, if (1) $w_1$ is before $w_2$ by object relation or by causal relation, or (2) $\exists$ some write $w_3$ such that $w_1\rightarrow w_3$ and $w_3 \rightarrow w_2$.)
Let $to$ be any such ordering.
We then add $C$'s read-only transactions on $to$ one by one. Let $T$ be the first read-only transaction such that some $to_1$ exists which can include $C$'s read-only transactions before $T$ but for any $to$, $T$ as well as $C$'s read-only transactions before $T$ cannot be placed in $to$ property (i.e., to satisfy causal serialization.)

Let $A$ be the set of such ordering $to$ that  can include $C$'s read-only transactions before $T$ and let $to_1$ be any ordering in $A$.
First, by our algorithm $\mathcal{A}$, it is easy to verify the following property of $to_1$: for any two reads $r(a)u, r(o)v^*\in T$, if in $to_1$, $\exists w(a)u^*$ such that $w(a)u$ is before $w(a)u^*$ and $w(a)u^*$ is before $w(o)v^*$, then $w(a)u^*\rightarrow w(o)v^*$ does not hold.
Second, based on the property and $to_1$, we construct $to_2$ as follows.
For any $r(a)u\in T$, consider $w(a)u^*$ as the first write of $a$ such that (1) $w(a)u^*$ is after $w(a)u$, (2) some $w(o)v^*$ is after $w(a)u^*$ and (3) $r(o)v^*\in T$.
We let $W_u$ be the set of such write $w(o)v^*$ that is after $w(a)u^*$ and (3) $r(o)v^*\in T$.
We then augment $W_u$ by adding the precedence of each element according to relation $\rightarrow$, and we do this until no more write after $w(a)u^*$ in $to_1$ can be added.
Let $ss$ be the subsequence of $to_1$ which contains all writes in $W_u$. 
We move $ss$ immediately before $w(a)u^*$.
Below we verify that the resulting ordering $to_m$ (not yet our goal $to_2$) falls in $A$.
By the construction based on relation $\rightarrow$, $to_m$ still respects causality and orders all writes of the same object according to the timestamps of these writes. 
We also verify that $C$'s read-only transactions before $T$ can be placed in $to_m$ by contradiction: suppose that some read-only transaction $T_0$ before $T$ finds the last preceding write of $T_0$ incorrect. As a result, $T_0$ must be after $w(a)u^*$ back in $to_1$; then $r(a)u^*\in T_0$; however, as $T$ returns a value at least observed by $C$'s previous operations, $T$ cannot return $u$ when $T_0$ has returned $u^*$, which gives a contradiction.
Now that the move of $ss$ creates no new pair $w(a)u$ and $w(o)v^*$ such that $r(a)u, r(o)v^*\in T$ and $w(o)v^*$ is after $w(a)u^*$ and $w(a)u^*$ is after $w(a)u$, then after a finite number of moves, we can construct an ordering $to_2\in A$ such that for any $r(a)u\in T$, $W_u = \emptyset$. 
Finally, if we place $T$ after the last write that corresponds some read in $T$ in $to_2$, then we find all preceding writes of $T$ are correct, a contradiction of our assumption. As a result, we must conclude that $\mathcal{A}$ satisfies causal consistency.

\subsection{Timestamp-based implementation}
The  algorithm here relies on the assumption that all processes can access a global accurate clock and accurate timestamps:
\begin{itemize}
\item Before any client starts a transaction, the client accesses the clock and stamps the transaction with the current time;
\item Every client sends the accurate timestamp while requesting a transaction;
\item If an operation writes a value to an object, then the server that stores the object attaches the timestamp to the value;
\item If an operation reads a value of an object, then the server that stores the object returns the value with the highest timestamp which is still smaller than the timestamp stamped by the client of the transaction.
\end{itemize}
Each transaction induces one communication round and is invisible.
The algorithm guarantees progress as the clock makes progress.

If updates are only allowed outside transactions, then the algorithm satisfies causal consistency trivially as the accurate timestamp serializes all these individual writes, The algorithm thus circumvents our Theorem \ref{thm:main} no matter whether communication delays are bounded or not.

If general transactions are allowed, then the algorithm can be adapted to still satisfy causal consistency when the message delay is upper-bounded by time $u$. More specifically, a client imposes that every transaction is executed for time $2u$ and instead of comparing with the timestamp $ts$ stamped by the client $C$, the server compares the timestamp of a value with $ts - 2u$ when responding to a read. All writes are still serialized, and these writes linked within the same transaction can be serialized at the same time.
The algorithm thus circumvents our Theorem \ref{thm:imp} when communication delays are bounded but a global accurate clock is accessible.

\end{appendix}

\end{document}